\newtheorem {thm}{Theorem}
\newtheorem{lem}{Lemma}
\begin{document}
\title{Robust Linear Hybrid Beamforming Designs Relying on Imperfect CSI in mmWave MIMO IoT Networks}
%\author{ Kunwar~Pritiraj~Rajput,~\IEEEmembership{Graduate Student Member,~IEEE,} Vikas~Sharma, Suraj~Srivastava,~\IEEEmembership{Graduate Student Member,~IEEE,} Priyanka~Maity,~\IEEEmembership{Graduate Student Member,~IEEE,} Naveen~K.~D.~Venkategowda,~\IEEEmembership{Member,~IEEE,}     
%        Aditya~K.~Jagannatham,~\IEEEmembership{Member,~IEEE,}  and Lajos Hanzo,~\IEEEmembership{Fellow,~IEEE}  
\author{ Kunwar~Pritiraj~Rajput,~\IEEEmembership{Member,~IEEE,} Priyanka~Maity, Suraj~Srivastava,~\IEEEmembership{Member,~IEEE,} Vikas~Sharma,  Naveen~K.~D.~Venkategowda,~\IEEEmembership{Member,~IEEE,}     
        Aditya~K.~Jagannatham,~\IEEEmembership{Member,~IEEE,} \\ and Lajos Hanzo,~\IEEEmembership{Life Fellow,~IEEE}      
\thanks{L. Hanzo would like to acknowledge the financial support of the Engineering and Physical Sciences Research Council projects EP/W016605/1 and EP/P003990/1 (COALESCE) as well as of the European Research Council's Advanced Fellow Grant QuantCom (Grant No. 789028). The work of Aditya K. Jagannatham was supported in part by the
Qualcomm Innovation Fellowship, and in part by the Arun Kumar Chair
Professorship.}\thanks{K. P. Rajput,  P. Maity, S. Srivastava, and A. K. Jagannatham are with the Department of Electrical Engineering, Indian Institute of Technology, Kanpur, Kanpur, 208016, India (e-mail: \{pratiraj, pmaity, ssrivast, adityaj\}@iitk.ac.in).}
\thanks{V. Sharma is with Qualcomm India Pvt. Ltd., Hyderabad, India (e-mail: vikashar@qti.qualcomm.com)}
\thanks{N. K. D. Venkategowda is with the Department of Science and Technology, Linköping University,  60174 Norrköping, Sweden (e-mail: naveen.venkategowda@liu.se).}
\thanks{L. Hanzo is with the School of Electronics and Computer Science, University of Southampton, Southampton SO17 1BJ, U.K. (e-mail:lh@ecs.soton.ac.uk)}.}   
\maketitle
\vspace{-30pt}
\begin{abstract}
Linear hybrid beamformer designs are conceived for the decentralized estimation of a vector parameter in a millimeter wave (mmWave) multiple-input multiple-output (MIMO) Internet of Things network (IoTNe). The proposed designs incorporate both total IoTNe and individual IoTNo power constraints, while also eliminating the need for a baseband receiver combiner at the fusion center (FC). To circumvent the non-convexity of the hybrid beamformer design problem, the proposed approach initially determines the minimum mean square error (MMSE) digital transmit precoder (TPC) weights followed by a simultaneous orthogonal matching pursuit (SOMP)-based framework for obtaining the analog RF and digital baseband TPCs. Robust hybrid  beamformers are also derived for the realistic imperfect channel state information (CSI) scenario, utilizing both the stochastic and norm-ball CSI uncertainty frameworks. The centralized MMSE bound  derived in this work serves as a lower bound for the estimation performance of the proposed hybrid TPC designs. Finally, our simulation results quantify the benefits of the various designs developed.
\end{abstract}
\begin{keywords}
Coherent MAC, CSI uncertainty, linear decentralized estimation, hybrid beamforming design, mmWave communication, Internet of Things.
\end{keywords} 
\section{Introduction}
The Internet of Things (IoT) technology has compelling applications  pertaining to health care \cite{9365708,9055403}, smart  cities \cite{7959573},  surveillance \cite{8889708}, smart farming \cite{8521668}, amongst many others \cite{8879484}. IoT networks (IoTNes), wherein several miniature  IoT nodes (IoTNos) monitor  multiple phenomena of interest and relay suitably processed observations to a fusion center (FC), play a key role in the IoTNe \cite{9098849}. The challenging task of an IoTNe is to reliably estimate the underlying unknown parameter/ parameters from the noisy measurements collected from a large number of IoTNo deployed in the area of interest. However, it must be noted that each IoTNo has different constraints like finite battery life and limited computational power, which makes an IoTNe different from the cellular network. Therefore, one must aim to develop low-complexity algorithms for the design of the TPC at each IoTNe in the face of both power as well as MSE constraints, which forms a key focus of this work \cite{7879243}. Furthermore, the soaring data rates necessitated by the ever increasing gamut of data-hungry applications, such as virtual/ augmented reality, V2X communication and massive machine type communication (mMTC), are beginning to clog up the spectrum available  in the sub-6 GHz frequency band. As a remedy, millimeter wave (mmWave) has abundant bandwidth resources. However, it must be noted that communications at mmWave frequencies add new challenges to those in the sub-6 GHz bands, such as higher propagation losses and severe signal blockage \cite{heath2016overview,rappaport2014millimeter,8207426} arising due to the extremely low wavelength in the mmWave regime. Large antenna arrays alleviate the above problem via the transmission of focussed signal beams using beamforming. Thus, multiple-input multiple-output (MIMO) beamforming technology is an invaluable tool in the practical realization of mmWave communication.  In such systems, the conventional transceiver design scheme that employs an individual RF chain for each antenna is infeasible due to its excessive power consumption, cost and complexity. Therefore the hybrid MIMO transceiver architecture combining analogue phase-shifting based beamforming and a digital baseband beamformer \cite{el2012low,el2014spatially,rusu2015low,
yu2016alternating} was proposed to overcome this problem. This paves the way for the practical realization of mmWave MIMO systems. In this context, there is  an acute need to develop hybrid beamforming strategies for efficient signal processing in mmWave MIMO IoTNes. A comprehensive literature survey of the existing contributions on such systems is presented next. \par
 \begin{table*}[t]
\centering
 \caption{Boldly contrasting our novelty to the existing literature.}
\label{Literature}
\begin{tabular}{||l|c|c|c|c|c|c|c|c|c|c|c|c|c|c|}
\hline
Feature &[7] &[18] &[19] & [20]& [21]& [23] & [24]& [25]&[30]&[31] & \textbf{Our work} \\
\hline
\hline
Millimeter wave IoTNe & \checkmark & & & & & & & & & &  \pmb{\checkmark}\\
\hline
Coherent MAC & &\checkmark &\checkmark &\checkmark & \checkmark & \checkmark & \checkmark & \checkmark & & & \pmb{\checkmark}\\
\hline
Vector parameter estimation  &\checkmark &\checkmark &\checkmark & \checkmark &\checkmark & & & \checkmark& & &\pmb{\checkmark}\\
\hline
Hybrid precoder design  &\checkmark & & & & & & & &\checkmark &\checkmark  &\pmb{\checkmark}\\
\hline
Non-iterative solutions  & & & & \checkmark& &\checkmark & & & & &\pmb{\checkmark}\\
\hline
Total network power constraint  & & & & & &\checkmark & & & &  &\pmb{\checkmark}\\
\hline
Per IoTNo power constraint  & &\checkmark &\checkmark & &\checkmark & & & \checkmark & \checkmark &\checkmark   &\pmb{\checkmark}\\
\hline
Stochastic CSI uncertainty model  & & & & & & & & & \checkmark & \checkmark &\pmb{\checkmark}\\
\hline
Norm ball CSI uncertainty model & & & & & &\checkmark & & & & & \pmb{\checkmark}\\
\hline
\end{tabular}
\end{table*}  
\subsection{Review of existing literature}
In a typical linear decentralized estimation setup, a large number of IoTNos are deployed in an area of interest which one wants to monitor. The IoTNos transmit their precoded observations over a coherent multiple access channel (MAC) to the FC for receiver combining (RC) and final estimation \cite{9464966}. Coherent MAC requires that all the IoTNos in the network are synchronized and their individual transmitted signals arrive as a coherent sum at the FC. The FC determines the transmit precoders (TPCs) for all IoTNos in the IoTNe and subsequently, feeds back to each IoTNo its own TPC weights through a feedback link. The pioneering paper by Xiao \textit{et al.} \cite{xiao2008linear} proposed an optimal linear transceiver design for the efficient
decentralized estimation of a vector parameter for a coherent MAC-based MIMO IoTNe. However, the analysis of \cite{xiao2008linear} assumed a simplistic diagonal model for the MIMO channel between each IoTNo and the FC, which restricts the applicability of their results. By extending this framework, Behbahani \textit{et al.} \cite{behbahani2012linear} proposed a ground-breaking iterative procedure for a joint transceiver design relying on mean square error (MSE) minimization  for vector parameter estimation at the FC. Along similar lines, Liu \textit{et al.} \cite{liu2017joint} developed decentralized and distributed schemes for the efficient estimation of a parameter vector. However, the algorithms developed in \cite{behbahani2012linear} and \cite{liu2017joint} are based on the iterative block coordinate descent (BCD) framework, which results in a high computational complexity, especially in a MIMO IoTNe associated with a large number of IoTNos. The authors of \cite{6470681,7572989,7060716} proposed an optimal IoTNo collaboration strategy, wherein each IoTNo shares its observations with its neighbouring IoTNos and only a subset of the IoTNos transmit their observations to the FC for the final estimation of the unknown parameter, which leads to an improved estimation performance. However, the  inter-IoTNo communication required for supporting this can potentially impose high overheads. \par
Another major drawback of all the above contributions is their reliance on perfect channel state information (CSI), which can be a significant stumbling block in the practical implementation of such systems. Given the stringent bandwidth constraints, it is also desirable that each node transmits very few pilot symbols to the FC, which often leads to a very coarse estimate of the channel state information (CSI). Therefore, this forms the motivation to develop robust hybrid TPC designs incorporating the uncertainty in the available estimate of the CSI. To address this limitation of the existing literature, Venkategowda \textit{et al.} \cite{venkategowda2017precoding} developed robust TPC designs by considering both the so-called norm ball and the ellipsoidal CSI uncertainty models. By further extending this idea,  Liu \textit{et al.}\cite{8660565} developed centralized and distributed robust transceiver designs by relying on MSE minimization subject to per IoTNo power constraints, and total power minimization under a specific MSE constraint. However, their work considered only the ellipsoidal CSI uncertainty model. Iterative schemes were also conceived for robust transceiver design for vector parameter estimation in \cite{7467438} by considering the norm ball CSI uncertainty model. At this juncture, it is important to note that while the above treatises have comprehensively addressed the problem of TPC/ RC design for IoTNes, they cannot be directly  extended to a 5G mmWave-based MIMO IoTNe relying on hybrid  beamforming. Hence, the contributions specifically related to mmWave MIMO beamforming are reviewed in detail next.\par
%\subsection{Review of existing works}  
Hybrid TPC/ RC designs have been proposed in\cite{el2012low,el2014spatially,rusu2015low,
yu2016alternating, 9395091,9445013} for cellular mmWave MIMO systems. The authors of \cite{9410457} developed a novel signal processing strategy for the reconfigurable intelligent surface (RIS)-aided MIMO uplink for energy and spectral efficiency maximization. An interesting analysis has been carried out therein to study the trade off between the energy and spectral efficiency. Furthermore, Huang et al.
\cite{9309152} have proposed an interesting deep reinforcement learning-based hybrid beamforming design for an RIS-aided Terahertz system. \textcolor{black}{Recently, the authors of \cite{liu2021hybrid} have proposed both the centralized (C)- and the asynchronous distributed (AD)-alternating direction method of multipliers (ADMM)-based schemes for linear minimum MSE (LMMSE) hybrid transceivers employed in coherent MAC-based IoTNe. Furthermore, Liu \textit{et al.} \cite{9098849} proposed a sophisticated technique for hybrid transceiver designs relying on perfect CSI in the context of an orthogonal MAC-based IoTNe. Briefly, hybrid transceiver designs were developed relying on the ADMM and steepest descent (SD) techniques. Since, the designs proposed in \cite{liu2021hybrid} and \cite{9098849} are iterative, they potentially lead to a high complexity. Additionally, it must also be borne in mind that the orthogonal MAC employed in \cite{9098849} for IoTNo communication has a low bandwidth efficiency in comparison to the coherent MAC \cite{liu2021hybrid,xiao2008linear}, especially in modern IoTNes having a large number of nodes.} Similar to sub-6 GHz systems, the availability of perfect CSI is also a serious limiting factor in the design of mmWave MIMO systems, which has only been addressed by a few studies \cite{8537897,8906245,9079551}. Luo \textit{et al.} \cite{8537897} conceived a hybrid transceiver for a scenario having a single source and multiple relays. The proposed designs target the maximization of the average signal-to-noise (SNR) power ratio at the destination using realistic imperfect CSI of the source-relay and relay-destination links. The cutting-edge paper of Jiang and Jafarkhani  \cite{8906245} designed a low complexity robust hybrid transceiver for a single-relay cooperative communication system, leveraging the principle of mutual information maximization. As a further advance, Zhao \textit{et al.} \cite{9079551} proposed a scheme for robust joint hybrid transceiver design in a full-duplex (FD) multi-cell mmWave network to maximize the sum-rate. However, the majority of these solutions cannot be incorporated in a IoTNe framework, since they do not address the associated stringent bandwidth and power limitations.  Furthermore, the authors of \cite{8537897,8906245,9079551}  consider only the stochastic CSI uncertainty model, which can lead to a poor worst-case performance. To the best of our knowledge, none of the existing treatises derived the hybrid beamforming techniques for vector parameter estimation in a coherent MAC- based mmWave MIMO IoTNe, accounting also for the imperfect nature of the CSI in a realistic implementation. Hence we are inspired to fill this knowledge gap in the existing literature. The main contributions of this study are itemized next and they are also boldly contrasted to the literature in Table-I.
\begin{figure*}[!t]

% ensure that we have normalsize text

\normalsize

% Store the current equation number.

%\setcounter{MYtempeqncnt}{\value{equation}}

%\setcounter{5}

% Set the equation number to one less than the one

% desired for the first equation here.

% The value here will have to changed if equations

% are added or removed prior to the place these

% equations are referenced in the main text.

\begin{center}
\includegraphics[width=0.8\linewidth]{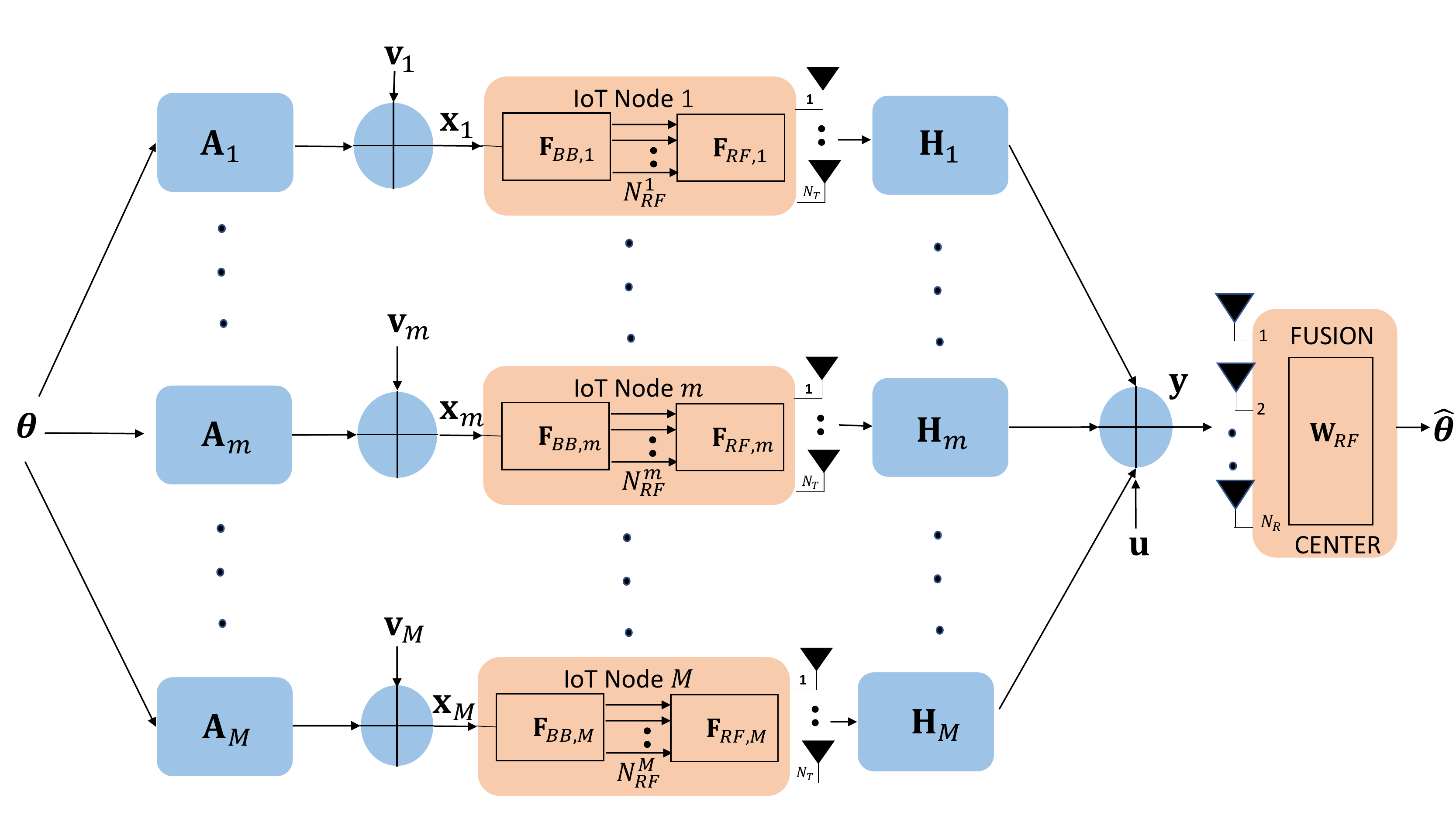}
\end{center}
\caption{System model depicting hybrid signal processing for the mmWave MIMO IoT network.}
\label{fig:noisy varing sensor}

\hrulefill

% The spacer can be tweaked to stop underfull vboxes.

%\vspace*{4pt}

\vspace{-10pt}

\end{figure*}
\subsection{Contributions of this Treatise} 
\begin{itemize}
\item Non-iterative hybrid linear beamforming techniques are proposed for minimizing the MSE of vector parameter estimation at the FC in a mmWave MIMO IoTNe, initially assuming perfect CSI availability. Sophisticated optimization paradigms are employed for incorporating both total and per-IoTNo power constraints into our TPC design.
\item A two-step procedure is proposed for overcoming the non-convexity associated with the MSE optimization problem, arising due to the constant gain constraint for the elements of the RF TPC and RC. The optimality of this procedure is rigorously justified via a mathematical proof of the proposition. Closed-form expressions are derived for the fully digital minimum MSE (MMSE) TPCs, employing a suitable setting for the RF combiner.
\item Subsequently, a low-complexity simultaneous orthogonal matching pursuit (SOMP)-based technique is proposed for designing the RF and baseband components of the fully-digital MMSE precoder.
\item In contrast to the existing works \cite{9098849}, \cite{liu2021hybrid}, we consider the realistic scenario of practical CSI uncertainty, and subsequently develop robust linear hybrid TPCs by modelling the CSI uncertainty using both the stochastic and the norm ball CSI uncertainty models for the above system. This reduces the pilot overhead and allows the IoTNe to work well even when the channel estimate is not very accurate.
\item The centralized MMSE bound is derived, which acts as a benchmark for characterizing the estimation performance of the proposed designs. 
\item Finally, our simulation results illustrate the performance of the various analytical solutions derived in this paper.
\end{itemize}\par
\subsection{Organization of the paper}
The remainder of the paper is organized as follows. Section-\ref{sys_mod} presents the mmWave MIMO IoTNe system model and formulates the MSE minimization problem for parameter estimation. Section-III describes the detailed procedure of obtaining the fully digital MMSE TPCs and the algorithm of determining the RF/ baseband TPC weights. Section-\ref{Imperfect_CSI} describes a pair of novel robust hybrid pre-processing designs accounting for the CSI uncertainty using both the stochastic and the norm ball CSI uncertainty models. Our simulation results are provided in Section-\ref{sim_results}, while our conclusions are given in Section-\ref{Con}. The proofs of the various propositions are provided in the Appendices. \par
\textbf{Notation:} Small and capital boldface letters $\mathbf{d}$ and $\mathbf{D}$ denote vectors and matrices, respectively. The transpose and Hermitian transpose operations are denoted by $(.)^T$ and $(.)^H$, respectively. Furthermore, $\left[\mathbf{D}\right]_{i,j}$ denotes the $(i,j)$th element of a matrix $\mathbf{D}$. The operator $\mathbb{E}\{.\}$ denotes the statistical expectation operator, while $\mathrm{diag}(\mathbf{D}_1,\mathbf{D}_2,\hdots,\mathbf{D}_L)$ denotes a block diagonal matrix constituted by the matrices $\mathbf{D}_1$, $\mathbf{D}_2$ upto $\mathbf{D}_L$ on its principal diagonal. The $l_2$ and Frobenius norms of vectors and matrices, respectively, are represented by $||.||$ and $||.||_F$, respectively, whereas $l_0$ norm is represented by $||.||_0$. The symbol $\otimes$ denotes the matrix Kronecker product and $\lambda_{\text{max}}(\mathbf{D})$ represents the maximum eigenvalue of the matrix $\mathbf{D}$. The operator $\mathrm{vec}(\mathbf{D})$ stacks the columns of a matrix $\mathbf{D}$ into a vector, while the operator $\mathrm{vec}^{-1}_m(\mathbf{d})$ converts a vector $\mathbf{d}$ into a matrix of $m$ rows and the appropriate number of columns. $\mathbf{X}=\mathbf{D}\left(:,\mathcal{K}\right)$ represents a sub-matrix $\mathbf{X}$ consisting of those columns of the matrix $\mathbf{D}$ whose information is provided by the set $\mathcal{K}$.
\section{mmWave IoTNe System Model and Problem Formulation}\label{sys_mod}
Similar to \cite{9098849}, \cite{liu2021hybrid}, \cite{9372293}, we consider the mmWave-based MIMO IoTNe depicted in Fig. \ref{fig:noisy varing sensor}, where $M$ multi-antenna aided IoTNos are deployed to observe/ sense an unknown vector parameter of interest denoted by $\boldsymbol{\theta}\in \mathbb{C}^{p \times 1}$ that is assumed to be distributed with mean of zero and covariance matrix of $\mathbf{R}_{\theta} \in \mathbb{C}^{p \times p}$. For instance, in an environmental monitoring application, the elements of the $p$ dimensional parameter $\boldsymbol{\theta}$ can represent various physical quantities, like pressure, temperature, moisture content etc.. The observation vector corresponding to IoTNo $m$, denoted by $\mathbf{x}_m\in \mathbb{C}^{q \times 1}$, and can be modeled as\cite{xiao2008linear, behbahani2012linear,9461735}
\begin{equation}
    \mathbf{x}_m = \mathbf{A}_m\boldsymbol{\theta} + \mathbf{v}_m,
\end{equation}
where $\mathbf{A}_m \in \mathbb{C}^{q \times p}$ denotes the observation matrix, while $\mathbf{v}_m \in \mathbb{C}^{q\times 1}$ represents the circularly symmetric complex Gaussian observation noise with mean zero and covariance matrix $\mathbf{R}_{m} \in \mathbb{C}^{q\times q}$. The quantity ${q}$ represents the number of measurements taken by the $m$th IoTNo. Subsequently, the observation vector $\mathbf{x}_m$ is precoded using the baseband TPC $\mathbf{F}_{\text{BB},m} \in \mathbb{C}^{{N}^m_{\text{RF}} \times q}$, followed by an RF TPC  $\mathbf{F}_{\text{RF},m} \in \mathbb{C}^{{N}_T \times {N}^m_{\text{RF}}}$, where the quantities $N_T$ and ${N}^m_{\text{RF}}$ denote the number of antennas and RF chains employed at the $m$th sensor, respectively. The pre-processed observation vectors corresponding to each IoTNo $m$ are subsequently transmitted over a coherent MAC to the FC. Hence, the signal $\mathbf {y} \in \mathbb{C}^{N_R \times 1}$ received by the FC can be mathematically expressed as
\begin{figure}
\centering 
\includegraphics[width=\linewidth]{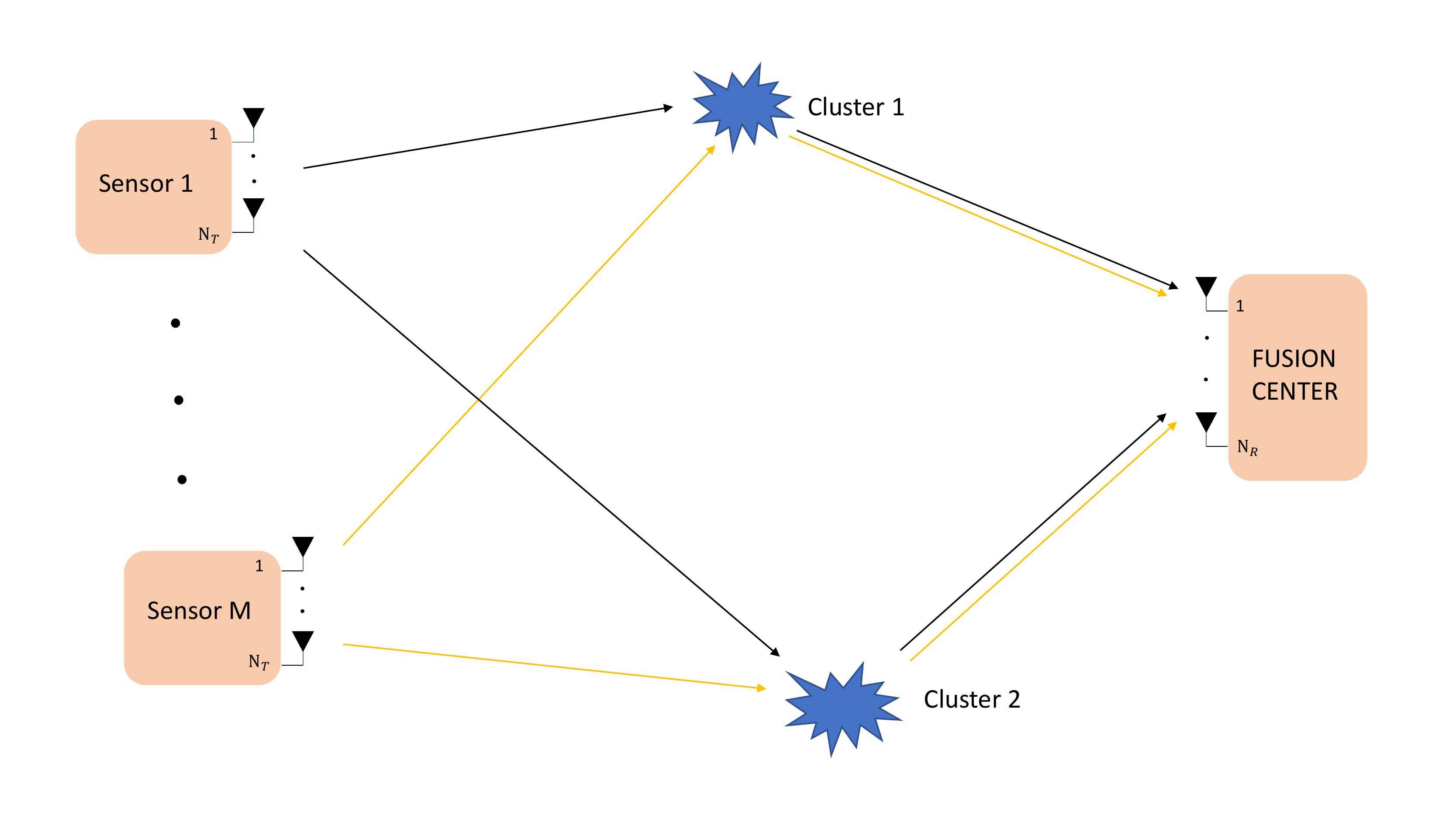}
\caption{\textcolor{black}{Spatial channel model of the mmWave MIMO channel}}\label{SCM}
\end{figure}
\begin{align}\label{Received_vec}
&\mathbf{y}= \sum_{m = 1}^{M} \mathbf{H}_m \mathbf{F}_{\text{RF},m} \mathbf{F}_{\text{BB},m} \mathbf{x}_m + \mathbf{u}\nonumber \\
&= \sum_{m = 1}^{M} \mathbf{H}_m \mathbf{F}_{\text{RF},m} \mathbf{F}_{\text{BB},m} \mathbf{A}_m \boldsymbol{\theta} + \sum_{m = 1}^{M} \mathbf{H}_m \mathbf{F}_{\text{RF},m} \mathbf{F}_{\text{BB},m} \mathbf{v}_m + \mathbf{u},
\end{align}
where $\mathbf{H}_{m} \in \mathbb{C}^{{N}_R \times {N}_{T}}$ denotes the mmWave MIMO channel between the $m$th IoTNo and the FC, which is equipped with $N_R$ antennas. The vector $\mathbf{u} \in \mathbb{C}^{{N}_R \times 1}$ denotes the circularly symmetric complex Gaussian noise vector at the FC, distributed with mean $\mathbf{0}$ and variance $\mathbf{R}_u \in \mathbb{C}^{{N}_R \times {N}_R}$. Subsequently, an RF RC $\mathbf{W}_{\text{RF}} \in \mathbb{C}^{{N}_R \times {N}^{\text{FC}}_{\text{RF}}}$ is employed at the FC. The resultant signal $\widetilde{\mathbf{y}} \in \mathbb{C}^{{N}^{\text{FC}}_{\text{RF}} \times 1}$ at the output of the RF RC is given by
\begin{align}\label{Rec_y}
\widetilde{\mathbf{y}}&=\mathbf{W}_{\text{RF}}^H \mathbf{y}=\sum_{m = 1}^{M} \mathbf{W}_{\text{RF}}^H \mathbf{H}_m \mathbf{F}_{\text{RF},m} \mathbf{F}_{\text{BB},m}  \mathbf{A}_m\boldsymbol{\theta} \nonumber 
\\ &+ \sum_{m = 1}^{M} \mathbf{W}_{\text{RF}}^H \mathbf{H}_m \mathbf{F}_{\text{RF},m} \mathbf{F}_{\text{BB},m} \mathbf{v}_m +{\mathbf{W}_{\text{RF}}^H\mathbf{u}},
 \end{align}
where ${N}^{\text{FC}}_{\text{RF}}$ denotes the number of RF chains employed at the FC. \textcolor{black}{Using the spatial channel model depicted in Fig.\ref{SCM}, the mmWave MIMO channel can be modeled as follows.}  As described in \cite{tse2005fundamentals,heath2016overview,srivastava2019quasi}, the channel matrix $\mathbf {H}_m$ between the FC and the $m$th IoTNo is given by
\begin{align}\label{mmWave_Ch}
\mathbf{H}_m=\sqrt{\displaystyle\frac{N_T N_R}{N_{cl}}}\sum\limits_{k=1}^{N_{cl}} \alpha^m_{k} \mathbf{a}_{\text{fc}}(\phi_{k}) \mathbf{a}^H_{\text{s}}(\theta_{k,m}),
\end{align}
where the 3-tuple $(\alpha^m_{k}, \phi_k, \theta_{k,m})$ represents the complex path gain $\alpha^m_{k}$, angle of arrival (AoA) $\phi_k$ at the FC, and angle of departure (AoD) $\theta_{k,m}$ corresponding to the $m$th IoTNo associated with the $k$th cluster, and $N_{cl}$ denotes the total number of clusters. The vectors $\mathbf{a}_{\text{fc}}(\phi_k)\in \mathbb{C}^{N_R \times 1}$ and $\mathbf{a}_{\text{s}}(\theta_{k,m})\in \mathbb{C}^{N_T \times 1}$ denote the array response vectors at the FC and the $m$th sensor, respectively, corresponding to the $k$th cluster, which are defined as
\begin{equation}
\mathbf{a}_{\text{fc}}(\phi_k)\hspace{-2pt}=\hspace{-2pt}\displaystyle\frac{1}{\sqrt{N_R}} \Big[ 1, e^{-j\tilde{\phi}_k}, \cdots , e^{-j(N_R-1)\tilde{\phi}_k}\Big]^T\hspace{-5pt}, \label{eq: rxarrresvec}
\end{equation}
\begin{equation}
\mathbf{a}_{\text{s}}(\theta_{k,m})\hspace{-2pt}=\hspace{-2pt}\displaystyle\frac{1}{\sqrt{N_T}} \Big[ 1, e^{-j\tilde{\theta}_{k,m}}, \cdots , e^{-j(N_T-1)\tilde{\theta}_{k,m}}\Big]^T\hspace{-4pt}, \label{eq: txarrresvec}
\end{equation}
where $\tilde{\phi}_k = \frac{2 \pi}{\lambda}d_R \cos(\phi_k)$ and $\tilde{\theta}_{k,m}= \frac{2 \pi}{\lambda}d_T \cos(\theta_{k,m})$. The quantities $\lambda, d_R,$ and $d_T$ denote the carrier's wavelength, as well as the inter-antenna spacings at the FC and each sensor, respectively. The mmWave MIMO channel $\mathbf{H}_m$ in \eqref{mmWave_Ch} can be equivalently represented in the compact form 
\begin{align}
\mathbf H_m = \mathbf{A}_{\text{fc}} \mathbf{D}_m \mathbf{A}_{\text{s},m}^{H},
\end{align}
where $\mathbf{A}_{\text{fc}} = [\mathbf a_{\text{fc}}(\phi_1), \cdots  , \mathbf a_{\text{fc}}(\phi_{N_{cl}})] \in \mathbb{C}^{N_R \times N_{cl}} , \ \mathbf{A}_{\text{s},m} = [\mathbf a_{\text{s}}(\theta_{1,m}), \cdots  , \mathbf a_{\text{s}}(\theta_{N_{cl},m})] \in \mathbb{C}^{N_T \times N_{cl}}$ are termed the array response matrices corresponding to the FC and the $m$th sensor, respectively, and the diagonal matrix $\mathbf{D}_m \in \mathbb{C}^{N_{cl} \times N_{cl}}$ is defined as  $\mathbf{D}_m=\sqrt{\frac{N_T N_R}{N_{cl}}} \mathrm{diag}(\alpha^m_1, \cdots, \alpha^m_{N_{cl}})$. The next section proposes hybrid linear beamforming designs for minimizing the MSE of parameter estimation at the FC.
\section{Linear hybrid beamforming designs with perfect CSI} \label{perfect_CSI}
Note that the signal $\widetilde{\mathbf{y}}$ in \eqref{Rec_y} represents the estimate of the unknown parameter vector $\boldsymbol{\theta}$ if the following condition holds:
\begin{align} \label{eq1}
\sum_{m=1}^{M}\mathbf{W}_{\text{RF}}^H \mathbf{H}_m \mathbf{F}_{\text{RF},m} \mathbf{F}_{\text{BB},m}  \mathbf{A}_m=\mathbf{D}=\begin{bmatrix}
\mathbf{I}_p\\
\mathbf{0}_{\left(N_{\text{RF}}^{\text{FC}}-p\right)\times p}
\end{bmatrix}, 
\end{align} 
where $\mathbf{D} \in \mathbb{C}^{N_{\text{RF}}^{\text{FC}} \times p}$. Due to the above estimation constraint, the number of RF chains $N_{\text{RF}}^{\text{FC}}$ at the FC must be greater than or equal to the number of parameters $p$, i.e., $N_{\text{RF}}^{\text{FC}} \geq p$. However, to limit the number of RF chains, we set $N_{\text{RF}}^{\text{FC}}=p$, i.e., $\mathbf{D}=\mathbf{I}_p$. The resultant MSE expression can be derived as 
\begin{align}\label{MSE}
\mathrm{MSE}=&\sum_{m = 1}^{M}\mathrm{Tr}\Big[\mathbf{W}_{\text{RF}}^H \mathbf{H}_m \mathbf{F}_{\text{RF},m} \mathbf{F}_{\text{BB},m} \mathbf{R}_m\mathbf{F}_{\text{BB},m}^H \mathbf{F}^H_{\text{RF},m}  \nonumber \\
&\times \mathbf{H}^H_m \mathbf{W}_{\text{RF}}\Big]+\mathrm{Tr}\left[\mathbf{W}_{\text{RF}}^H\mathbf{R}_u\mathbf{W}_{\text{RF}}\right].
\end{align}
The general optimization problem of minimizing the MSE in \eqref{MSE} obeying the zero-forcing constraint in \eqref{eq1} is given by
 \begin{eqnarray}\label{eq3}
 \begin{aligned}
 \underset{\mathbf{F}_{\text{RF},m},\mathbf{F}_{\text{BB},m},\newline \mathbf{W}_{\text{RF}}}{\text{minimize}}& \mathrm{MSE}  \\
\text{subject to} &  \sum\limits_{m=1}^{M}\mathbf{W}_{\text{RF}}^H \mathbf{H}_m \mathbf{F}_{\text{RF},m} \mathbf{F}_{\text{BB},m}  \mathbf{A}_m=\mathbf{I}_{p},\\
 &  \left\lvert\left[\mathbf{F}_{\text{RF},m}\right]_{i,j}\right\rvert=\frac{1}{\sqrt{N_T}}, \; 1 \leq m \leq M, \\
 & \left\lvert\left[\mathbf{W}_{\text{RF}}\right]_{i,j}\right\rvert=\frac{1}{\sqrt{N_R}}.
 \end{aligned}
 \end{eqnarray}
\textcolor{black}{The optimization problem in \eqref{eq3} is non-convex since both the objective function and the first constraint are non-convex in terms of the optimization variables $\left\{\{\mathbf{F}_{\text{RF},m}\}_{m=1}^M,\{\mathbf{F}_{\text{BB},m}\}_{m=1}^M,\mathbf{W}_{\text{RF}}\right\}$,  which are coupled with each other.} In addition, the constant magnitude constraints on the elements of the RF TPCs $\left\{\mathbf{F}_{\text{RF},m}\right\}_{m=1}^{M}$ and combiner $\mathbf{W}_{\text{RF}}$ are also non-convex. Hence, the optimization problem in  \eqref{eq3} is intractable. Our novel procedure to be described next overcomes this obstacle of designing the optimal TPCs and RC. To begin with, an appropriate choice of the RF RC $\mathbf{W}_{\text{RF}}$  is obtained as follows.\par 
Let the quantity $\alpha_k = \sum_{m=1}^{M} \vert \alpha^m_{k} \vert$ denote the sum of the magnitudes of complex path-gains of each IoTNo corresponding to the $k$th cluster. One can now arrange  the quantities $\alpha_{k}$ in decreasing order of their magnitudes as $\vert\alpha_{k_1}\vert \geq \vert\alpha_{k_2}\vert \geq  \cdots \geq \vert\alpha_{k_{N_{cl}}}\vert$. The RF combiner $\mathbf{W}_{\text{RF}}$ can now be designed as
\begin{align}
\mathbf{W}_{\text{RF}}& = \mathbf{A}_{\text{fc}}\left( \ :\ ,\  \mathcal{K}\ \right)=\left[\mathbf{a}_{\text{fc}}(\phi_{k_1}), \mathbf{a}_{\text{fc}}(\phi_{k_2}),\hdots,\mathbf{a}_{\text{fc}}(\phi_{k_{N^{\text{FC}}_{\text{RF}}}})\right],
\end{align}
where $\mathcal{K} = \left\{ k_1, k_2, \hdots, k_{N^{\text{FC}}_{\text{RF}}} \right\}$. Upon employing this setting of the RF RC $\mathbf{W}_{\text{RF}}$, the optimal fully digital TPC $\mathbf{F}_{m}$ for each IoTNo $m$ can be designed as described next. \par
To this end, employing the following identity  \cite{zhang2017matrix}
\begin{align}\label{Trace_vec}
\mathrm{Tr}\left[\mathbf{L}^H \mathbf{N}\mathbf{P}\mathbf{Q}\right]=\mathrm{vec}(\mathbf{L})^H\left(\mathbf{Q}^T \otimes \mathbf{N}\right)\mathrm{vec}(\mathbf{P}),
\end{align}
and further defining the fully-digital TPC matrix and vector  corresponding to the TPC vector of the $m$th IoTNo as $\mathbf{F}_m=\mathbf{F}_{\text{RF},m}\mathbf{F}_{\text{BB},m}$, and $\mathbf{f}_m=\mathrm{vec}(\mathbf{F}_m) \in \mathbb{C}^{qN_T \times 1}$, respectively, the MSE expression in \eqref{MSE} can be rewritten as 
\begin{align}
\mathrm{MSE}&=\sum_{m=1}^{M}\mathbf{f}^H_m\boldsymbol{\Psi}_m\mathbf{f}_m+\mathrm{Tr}\left[\mathbf{W}^H_{\text{RF}}\mathbf{R}_u\mathbf{W}_{\text{RF}}\right] \nonumber \\
&=\mathbf{f}^H\boldsymbol{\Psi}\mathbf{f}+\mathrm{Tr}\left[\mathbf{W}^H_{\text{RF}}\mathbf{R}_u\mathbf{W}_{\text{RF}}\right], \label{MSE1}
\end{align}
where the matrix $\mathbf{\Psi}_m \in \mathbb{C}^{qN_T \times qN_T}$ is defined as $\mathbf{\Psi}_m=\left(\mathbf{R}_m^T \otimes \mathbf{H}^H_m \mathbf{W}_{\text{RF}} \mathbf{W}^H_{\text{RF}}\mathbf{H}_m\right)$. Furthermore, the quantities $\mathbf{f} \in \mathbb{C}^{MqN_T \times 1}$ and $\mathbf{\Psi} \in \mathbb{C}^{MqN_T \times MqN_T}$ are defined as 
\begin{align}
\mathbf{f}&=\left[\mathbf{f}_1^H,\mathbf{f}_2^H,\hdots,\mathbf{f}_M^H\right]^H,  \\
\mathbf{\Psi}&=\mathrm{diag}\left(\mathbf{\Psi}_1,\mathbf{\Psi}_2,\hdots,\mathbf{\Psi}_M\right).
\end{align}
Upon applying the vectorization and identity $\mathrm{vec}\left(\mathbf{N}\mathbf{P}\mathbf{Q}\right)=\left(\mathbf{Q}^T \otimes \mathbf{N}\right)\mathrm{vec}(\mathbf{P})$, on both sides of the constraint in \eqref{eq1}, the ZF constraint can be written as
\begin{align}\label{eq5}
\sum_{m=1}^{M} \mathbf{Z}_m\mathbf{f}_m=\mathbf{Z}\mathbf{f}=\mathbf{b}, 
\end{align}
where the matrices obey $\mathbf{Z}_m=\left[\mathbf{A}^T_m \otimes \mathbf{W}_{\text{RF}}^H \mathbf{H}_m\right] \in \mathbb{C}^{p^2 \times qN_T}$ and $\mathbf{Z}= \left[\mathbf{Z}_1,\mathbf{Z}_2,\hdots,\mathbf{Z}_M\right] \in \mathbb{C}^{p^2 \times MqN_T}$, and we have $\mathbf{b}=\mathrm{vec}\left( \mathbf{I}_{p}\right) \in \mathbb{C}^{p^2 \times 1}$. Hence, the pertinent optimization problem of minimizing the resultant MSE in \eqref{MSE1} subject to the estimation constraint in \eqref{eq5} is given by
 \begin{eqnarray}\label{MSE_Min}
 \underset{\mathbf{f}}{\text{minimize}}& \mathbf{f}^H\mathbf{\Psi}\mathbf{f} \nonumber\\
 \text{subject to} & \mathbf{Z}\mathbf{f} = {\mathbf{b}},
 \end{eqnarray}
where the constant term $\mathrm{Tr}\left[\mathbf{W}^H_{\text{RF}}\mathbf{R}_u\mathbf{W}_{\text{RF}}\right]$ in \eqref{MSE1} is not considered in the above optimization objective, since it is independent of the optimization variable $\mathbf{f}$. Upon invoking the various Karush-Kuhn-Tucker (KKT) conditions, the closed-form expression of the optimal TPC vector $\mathbf{f}^{\mathrm{opt}}$ can be determined as \cite{boyd2004convex}
\begin{align}\label{Opt_Precoder}
\mathbf{f}^{\mathrm{opt}}=\mathbf{\Psi}^{-1}\mathbf{Z}^H\left(\mathbf{Z}\mathbf{\Psi}^{-1}\mathbf{Z}^H\right)^{-1}\mathbf{b}.
\end{align}
%A detailed proof of the above result is given in Appendix-\ref{AppA}.
Subsequently, the optimal digital TPC vector for the $m$th sensor, denoted by $\mathbf{f}_m^{\mathrm{opt}}$ can be obtained by extracting the sub-vector corresponding to the elements $\left[(m-1)qN_T+1\right]$ to $\left[mqN_T\right]$ from $\mathbf{f}^{\mathrm{opt}}$. Finally, the optimal TPC matrix corresponding to the $m$th IoTNo is obtained as $\mathbf{F}_m^{\mathrm{opt}}=\mathrm{vec}^{-1}_q\left(\mathbf{f}^{\mathrm{opt}}_m\right)$. The next subsection extends the above design paradigm to include both total and individual IoTNo power constraints.
\subsection{Precoder design under power constraints}
The average transmit power of the $m$th IoTNo can be expressed as
\begin{align}\label{eq2}
&\mathbb{E}\left\{\left\lvert\left\lvert\mathbf{F}_{\text{RF},m} \mathbf{F}_{\text{BB},m}\mathbf{x}_m\right\rvert\right\rvert^2\right\}\nonumber \\&=\mathrm{Tr}\left[\mathbb{E}\left\{\left(\mathbf{F}_{\text{RF},m} \mathbf{F}_{\text{BB},m}\mathbf{x}_m\right)\left(\mathbf{F}_{\text{RF},m} \mathbf{F}_{\text{BB},m}\mathbf{x}_m\right)^H\right\}\right]\nonumber \\
&=\mathrm{Tr}\left[\mathbf{F}_{\text{RF},m} \mathbf{F}_{\text{BB},m}\left(\mathbf{A}_m\mathbf{R}_{\theta}\mathbf{A}^H_m+\mathbf{R}_m\right)\mathbf{F}^H_{\text{BB},m}\mathbf{F}^H_{\text{RF},m} \right].
\end{align}
Therefore, the expression for the total average transmit power of the IoTNe is given by
\begin{align}\label{eq4}
&\sum_{m=1}^M\mathbb{E}\left\{\left\lvert\left\lvert\mathbf{F}_{\text{RF},m} \mathbf{F}_{\text{BB},m}\mathbf{x}_m\right\rvert\right\rvert^2\right\}\nonumber \\
&=\sum_{m=1}^M\mathrm{Tr}\left[\mathbf{F}_{\text{RF},m} \mathbf{F}_{\text{BB},m}\left(\mathbf{A}_m\mathbf{R}_{\theta}\mathbf{A}^H_m+\mathbf{R}_m\right)\mathbf{F}^H_{\text{BB},m}\mathbf{F}^H_{\text{RF},m} \right].
\end{align}
Exploiting now the identity in \eqref{Trace_vec}, the average transmit power expression of the $m$th IoTNo in \eqref{eq2} can be recast as
\begin{align}\label{eq7}
\mathbf{f}_m^H\left[\left(\mathbf{A}_m\mathbf{R}_{\theta}\mathbf{A}^H_m+\mathbf{R}_m\right)^T \otimes \mathbf{I}_{N_T} \right]\mathbf{f}_m=\mathbf{f}_m^H\mathbf{\Gamma}_m\mathbf{f}_m,
\end{align}
where we have $\mathbf{\Gamma}_m=\left[\left(\mathbf{A}_m\mathbf{R}_{\theta}\mathbf{A}^H_m+\mathbf{R}_m\right)^T \otimes \mathbf{I}_{N_T} \right] \in \mathbb{C}^{qN_T \times qN_T}$. Hence, the compact expression for the total power of the IoTNe can be formulated as 
\begin{align}\label{eq6}
\sum_{m=1}^{M}\mathbf{f}_m^H\mathbf{\Gamma}_m\mathbf{f}_m=\mathbf{f}^H\mathbf{\Gamma}\mathbf{f},
\end{align}
where we define the matrix $\mathbf{\Gamma}=\mathrm{diag}\left(\mathbf{\Gamma}_1,\mathbf{\Gamma}_2,\hdots,\mathbf{\Gamma}_M\right) \in \mathbb{C}^{MqN_T \times MqN_T}$. Therefore, the optimization problem that minimizes the resultant MSE at the FC constrained by the zero-forcing and total network power constraints in \eqref{eq5} and \eqref{eq6}, respectively, is given by
 \begin{eqnarray}\label{Total}
 \underset{\mathbf{f}}{\text{minimize}}& \mathbf{f}^H\mathbf{\Psi}\mathbf{f} \nonumber\\
 \text{subject to} & \mathbf{Z}\mathbf{f} = {\mathbf{b}}, \nonumber\\
 & \mathbf{f}^H \mathbf{\Gamma}\mathbf{f} \leq \rho_T,
 \end{eqnarray}
where the quantity $\rho_T$ denotes the total IoTNe transmit power level, which can be utilized for transmitting observations to the FC. Once again, upon applying the KKT framework\cite{boyd2004convex}, the optimal TPC vector $\mathbf{f}^{\mathrm{opt}}$ is obtained as 
\begin{align}\label{Opt_Precoder_Totol}
\mathbf{f}^{\mathrm{opt}}=\left[\mathbf{\Psi}+\lambda^{\mathrm{opt}}\mathbf{\Gamma}\right]^{-1}\mathbf{Z}^H\left[\mathbf{Z}^H \left[\mathbf{\Psi}+\lambda^{\mathrm{opt}}\mathbf{\Gamma}\right]^{-1}\mathbf{Z}\right]^{-1}\mathbf{b},
\end{align}
where $\lambda^{\mathrm{opt}}$ denotes the optimal dual variable corresponding to the inequality constraint. One can follow the procedure given after Eq. \eqref{Opt_Precoder} to determine the optimal TPC matrix for every IoTNo in the network from the optimal TPC vector $\mathbf{f}^{{\mathrm{opt}}}$. The MSE of parameter estimation in \eqref{MSE1} can also be minimized by taking each sensor's transmit power level into account. The corresponding optimization paradigm can be formulated  as
 \begin{eqnarray}\label{individual}
\underset{\mathbf{f}}{\text{minimize}}& \mathbf{f}^H\mathbf{\Psi}\mathbf{f} \nonumber\\
 \text{subject to} & \mathbf{Z}\mathbf{f} = {\mathbf{b}}, \nonumber\\
   & \mathbf{f}_m^H \mathbf{\Gamma}_m\mathbf{f}_m \leq \rho_m,\; 1 \leq m \leq M,
 \end{eqnarray}
where the quantity $\rho_m$ denotes the transmit power constraint of the $m$th IoTNo. Upon exploiting the convex nature of the above optimization problem, standard convex optimization solvers, such as CVX \cite{cvx}, can be employed for finding the solution. The corresponding optimal TPC matrix $\mathbf{F}_m^{\mathrm{opt}}$ can be obtained, once again, using a procedure similar to the one described after Eq.\eqref{Opt_Precoder}. The next subsection describes the procedure of decomposing each TPC matrix $\mathbf{F}_m$ into the corresponding RF and baseband TPC matrices $\mathbf{F}_{\text{RF},m}$ and $\mathbf{F}_{\text{BB},m}$, respectively, for each IoTNo $m$ in the IoTNe.
\subsection{SOMP based RF and baseband precoder design}
The optimal TPC vector in \eqref{Opt_Precoder} can be rewritten as,
\begin{equation}
\mathbf{f}^{\mathrm{opt}}=\mathbf{{\Psi}}^{-1}\mathbf{\tilde{b}},
\end{equation}
where $\tilde{\mathbf{b}}=\mathbf{Z}^H\left(\mathbf{Z}\mathbf{\Psi}^{-1}\mathbf{Z}^H\right)^{-1}\mathbf{b} \in \mathbb{C}^{MqN_T \times 1}$. Furthermore, the MMSE digital TPC vector $\mathbf{f}^{\mathrm{opt}}$ can be rearranged as 
\begin{align}
    \mathbf{f}^{\mathrm{opt}} = \begin{bmatrix}
\mathbf{f}_1^{\mathrm{opt}}\\
\mathbf{f}_2^{\mathrm{opt}}\\
\vdots\\
\mathbf{f}_M^{\mathrm{opt}}
\end{bmatrix}=
\begin{bmatrix}
\mathbf{\Psi}_1^{-1} & \mathbf{0} & \cdots & \mathbf{0}\\
\mathbf{0} & \mathbf{\Psi}_2^{-1} &  \cdots & \mathbf{0}\\
\vdots & & \ddots & \vdots\\
\mathbf{0} & \mathbf{0} & \cdots & \mathbf{\Psi}_M^{-1}
\end{bmatrix} \tilde{\mathbf{b}}.
\end{align}
It can be readily deduced from the above equation that the MMSE digital TPC vector corresponding to the $m$th IoTNo can be derived as $\mathbf{f}^{\mathrm{opt}}_m=\textrm{vec}\left(\mathbf{F}_m^{\mathrm{opt}}\right)=\mathbf{{\Psi}}_m^{-1}\tilde{\mathbf{b}}_m$. The quantity $\tilde{\mathbf{b}}_m$ can be obtained by extracting the $[(m-1)qN_T]$th element to the $[mqN_T]$th elements of the column vector $\tilde{\mathbf{b}}$. Furthermore, the optimal TPC matrix for the $m$th IoTNo can be determined as $\mathbf{F}_m^{\mathrm{opt}}=\mathrm{vec}^{-1}_q\left(\mathbf{f}_m\right)$. Subsequently, the problem of decomposing $\mathbf{F}_m^{\mathrm{opt}}$ into its baseband and RF components $\mathbf{F}_{\text{BB},m}$ and $\mathbf{F}_{\text{RF},m}$, respectively, can be formulated as the following best linear approximation problem:
 \begin{equation}
\begin{aligned} \label{E34}
\underset{\mathbf{F}_{\text{RF},m},{\mathbf{F}_{\text{BB},m}}}{\text{minimize}}\quad &\left\Vert{\mathbf{F}_m^{\mathrm{opt}}-\mathbf{F}_{\text{RF},m}{\mathbf{F}}_{\text{BB},m}}\right\Vert_{F}^2 \\
\textrm{subject to}
\quad & \left\lvert\left[\mathbf{F}_{\text{RF},m}\right]_{i,j}\right\rvert=\frac{1}{\sqrt{N_T}}.
\end{aligned}
\end{equation}
Note that the constant magnitude constraint for the elements of the RF TPC $\mathbf{F}_{\text{RF},m}$ makes the above problem non-convex and intractable. One can overcome this problem by exploiting the following interesting relationship between the MMSE TPC matrix $\mathbf{F}_m^{\mathrm{opt}}$ and the corresponding transmit array response matrix $\mathbf{A}_{s,m}$.
\begin{lem}
The column space of the fully digital MMSE TPC matrix $\mathbf{F}_m$ lies in the row space of the mmWave MIMO channel matrix $\mathbf{H}_m$, i.e.,
\begin{align}
\mathcal{C}(\mathbf{F}_m)\subseteq \mathcal{R}(\mathbf{H}_m),
\end{align}
where the row and column spaces of the matrices $\mathbf{H}_m$ and $\mathbf{F}_m$, corresponding to the $m$th sensor, are denoted by $\mathcal{R}(\mathbf{H}_m)$ and $\mathcal{C}(\mathbf{F}_m)$, respectively.
\end{lem}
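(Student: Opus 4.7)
The plan is to exploit a key structural observation: both the MSE objective in \eqref{MSE1} and the ZF constraint in \eqref{eq5} depend on $\mathbf{F}_m$ only through the product $\mathbf{H}_m\mathbf{F}_m$, whereas the per-sensor transmit-power expression \eqref{eq7} is a strictly positive quadratic form in the full matrix $\mathbf{F}_m$. Any column component of $\mathbf{F}_m$ inside the null space $\mathcal{N}(\mathbf{H}_m)$ therefore consumes power without affecting either the MSE or the ZF equality, and hence must vanish at the optimum, forcing every column of the optimal $\mathbf{F}_m$ into the orthogonal complement of $\mathcal{N}(\mathbf{H}_m)$, which is exactly the row space $\mathcal{R}(\mathbf{H}_m)$.

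First I would unvectorise. Using $\mathbf{f}_m=\mathrm{vec}(\mathbf{F}_m)$ together with the identity $\mathrm{vec}(\mathbf{X}\mathbf{Y}\mathbf{Z})=(\mathbf{Z}^T\otimes\mathbf{X})\mathrm{vec}(\mathbf{Y})$ one obtains
\[
\mathbf{f}_m^H\boldsymbol{\Psi}_m\mathbf{f}_m=\mathrm{Tr}\!\big(\mathbf{R}_m\mathbf{F}_m^H\mathbf{H}_m^H\mathbf{W}_{\mathrm{RF}}\mathbf{W}_{\mathrm{RF}}^H\mathbf{H}_m\mathbf{F}_m\big),
\]
\[
\mathbf{Z}_m\mathbf{f}_m=\mathrm{vec}\big(\mathbf{W}_{\mathrm{RF}}^H\mathbf{H}_m\mathbf{F}_m\mathbf{A}_m\big),
\]
confirming that both quantities are functions of $\mathbf{H}_m\mathbf{F}_m$ alone. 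In parallel, the transmit-power expression \eqref{eq7} equals $\mathrm{Tr}(\mathbf{F}_m\mathbf{S}_m\mathbf{F}_m^H)$ with $\mathbf{S}_m\triangleq\mathbf{A}_m\mathbf{R}_\theta\mathbf{A}_m^H+\mathbf{R}_m\succ\mathbf{0}$.

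Next I would perform an orthogonal column-wise decomposition $\mathbf{F}_m=\mathbf{F}_m^{\|}+\mathbf{F}_m^{\perp}$, placing every column of $\mathbf{F}_m^{\|}$ in $\mathcal{R}(\mathbf{H}_m)$ and every column of $\mathbf{F}_m^{\perp}$ in $\mathcal{N}(\mathbf{H}_m)$. Then $\mathbf{H}_m\mathbf{F}_m^{\perp}=\mathbf{0}$, so the MSE and ZF expressions coincide for $\mathbf{F}_m$ and for $\mathbf{F}_m^{\|}$. Column-wise orthogonality additionally yields $(\mathbf{F}_m^{\|})^H\mathbf{F}_m^{\perp}=\mathbf{0}$, which kills the cross terms in the power expansion and gives
\[
\mathrm{Tr}(\mathbf{F}_m\mathbf{S}_m\mathbf{F}_m^H)=\mathrm{Tr}\big(\mathbf{F}_m^{\|}\mathbf{S}_m(\mathbf{F}_m^{\|})^H\big)+\mathrm{Tr}\big(\mathbf{F}_m^{\perp}\mathbf{S}_m(\mathbf{F}_m^{\perp})^H\big).
\]
Because $\mathbf{S}_m\succ\mathbf{0}$, the second summand is strictly positive whenever $\mathbf{F}_m^{\perp}\neq\mathbf{0}$. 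Consequently, replacing $\mathbf{F}_m$ by $\mathbf{F}_m^{\|}$ preserves the estimation constraint, keeps the MSE unchanged, and strictly reduces the transmit power that appears in \eqref{Total}/\eqref{individual}. Hence any optimum must satisfy $\mathbf{F}_m^{\perp}=\mathbf{0}$, i.e., $\mathcal{C}(\mathbf{F}_m)\subseteq\mathcal{R}(\mathbf{H}_m)$, as claimed.

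The main obstacle I anticipate is the unpenalised formulation \eqref{MSE_Min}, in which the null-space component attracts no cost and the optimum is therefore non-unique. I would resolve this by reading \eqref{Opt_Precoder} through a Moore--Penrose lens: the Hermitian block $\boldsymbol{\Psi}_m=\mathbf{R}_m^T\otimes(\mathbf{H}_m^H\mathbf{W}_{\mathrm{RF}}\mathbf{W}_{\mathrm{RF}}^H\mathbf{H}_m)$ has range contained in $\mathbb{C}^q\otimes\mathcal{R}(\mathbf{H}_m)$, so the pseudoinverse representative produced by \eqref{Opt_Precoder} unvectorises into an $\mathbf{F}_m$ whose columns automatically lie in $\mathcal{R}(\mathbf{H}_m)$. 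The lemma therefore applies uniformly across all three TPC formulations feeding the SOMP decomposition \eqref{E34}.
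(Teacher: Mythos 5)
Your proposal bundles two distinct arguments, and they fare very differently against the paper.

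The headline variational argument has a genuine gap: its conclusion that ``any optimum must satisfy $\mathbf{F}_m^{\perp}=\mathbf{0}$'' does not follow. In \eqref{Total} and \eqref{individual} the transmit power enters only as an \emph{inequality constraint}, never as a cost, so projecting out the null-space component produces \emph{another} optimal point (same MSE, feasible with slack in the power constraint) but does nothing to disqualify the original one: an optimizer with $\mathbf{F}_m^{\perp}\neq\mathbf{0}$ that still meets the power budget is exactly as optimal, so the most your argument yields is that \emph{some} optimum lies in the row space. Worse, for \eqref{MSE_Min} --- the problem whose closed-form solution \eqref{Opt_Precoder} is the $\mathbf{F}_m^{\mathrm{opt}}$ actually handed to the SOMP factorization \eqref{E34}, i.e., the object the lemma is really about --- no power term exists anywhere, so the variational argument is vacuous there, as you yourself concede. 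Because the MSE objective and the ZF constraint are both blind to components in $\mathcal{N}(\mathbf{H}_m)$, no optimality-based reasoning alone can pin down the column space; the lemma is a property of the particular representative constructed in \eqref{Opt_Precoder}, and a valid proof must engage that construction.

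Your fallback does engage it, is correct, and is essentially the paper's own route. The paper's Appendix expands $\mathbf{\Psi}_m^{-1}$ blockwise as $r_{ij}\left(\mathbf{H}_m^H\mathbf{H}_m\right)^{-1}$, observes that each column of $\mathbf{F}_m^{\mathrm{opt}}=\mathrm{vec}^{-1}_q\left(\mathbf{\Psi}_m^{-1}\tilde{\mathbf{b}}_m\right)$ is therefore a linear combination of columns of $\left(\mathbf{H}_m^H\mathbf{H}_m\right)^{-1}$, and uses the SVD $\mathbf{H}_m=\mathbf{U}_m\mathbf{\Sigma}_m\mathbf{V}_m^H$ to place those columns in the span of $\mathbf{V}_m$, i.e., in $\mathcal{R}(\mathbf{H}_m)$. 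Your phrasing --- $\mathbf{f}_m^{\mathrm{opt}}\in\mathcal{C}(\mathbf{\Psi}_m^{\dagger})=\mathcal{C}(\mathbf{\Psi}_m)\subseteq\mathbb{C}^{q}\otimes\mathcal{C}(\mathbf{H}_m^H)$, followed by unvectorization --- reaches the same conclusion by the same mechanism (the range of the Kronecker-structured operator applied to $\tilde{\mathbf{b}}_m$), but is strictly more robust: it needs no invertibility, so it survives rank-deficient channels ($N_{cl}<N_T$) and the fact that, with $N_{\text{RF}}^{\text{FC}}=p<N_T$, the matrix $\mathbf{H}_m^H\mathbf{W}_{\text{RF}}\mathbf{W}_{\text{RF}}^H\mathbf{H}_m$ appearing in the main text's definition of $\mathbf{\Psi}_m$ is singular, whereas the paper's proof silently drops $\mathbf{W}_{\text{RF}}$ and assumes $\left(\mathbf{H}_m^H\mathbf{H}_m\right)^{-1}$ exists. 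My recommendation: demote the variational discussion to motivation and present the range/pseudoinverse argument as the proof.
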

\begin{proof}
The detailed proof is given in Appendix-\ref{Lemma1}. 
\end{proof}
Thus, one can choose the columns of $\mathbf{F}_{\text{RF},m}$ as the columns of the transmit array response matrix $\mathbf{A}_{s,m}$. The resultant optimization problem can be formulated as
\begin{equation}\label{EqSOMP}
\begin{aligned} 
\underset{{\widetilde{\mathbf{F}}_{\text{BB},m}}}{\text{minimize}} \quad & \left\Vert{\mathbf{F}_m^{\mathrm{opt}}-\mathbf{A}_{s,m}\widetilde{\mathbf{F}}_{\text{BB},m}}\right\Vert_{F}^2 \\
\textrm{subject to} \quad & 
\left\Vert \mathrm{diag}( \widetilde{\mathbf{F}}_{\text{BB},m}\widetilde{\mathbf{F}}_{\text{BB},m}^H) \right\Vert_0=N^m_{\text{RF}}, 
\end{aligned}
\end{equation}
where the constraint $\vert\vert \mathrm{diag}( \widetilde{\mathbf{F}}_{\text{BB},m}\widetilde{\mathbf{F}}_{\text{BB},m}^H) \vert\vert_0=N^m_{\text{RF}}$ results from the fact that only $N^m_{\text{RF}}$ out of $N_T$ rows must be non-zero, since there are only $N^m_{\text{RF}}$ RF chains at the $m$th IoTNo in the hybrid mmWave MIMO IoTNe. Thus, the matrix $\widetilde{\mathbf{F}}_{\text{BB},m} \in \mathbb{C}^{N_{cl} \times q}$ is block sparse in nature. The sparse signal recovery TPC design problem in \eqref{EqSOMP} can be readily solved by employing the popular SOMP algorithm. The algorithm terminates after a finite number of iterations, i.e. it requires $N^m_{\text{RF}}$ iterations to choose the $N^m_{\text{RF}}$ dominant transmit array response vectors $\mathbf{a}_{\text{s}}(\theta_{k,m})$ from the matrix $\mathbf{A}_{s,m}$. A brief summary of the algorithm follows. \par
The algorithm begins with step (5) that evaluates the projection of each column of $\mathbf{A}_{s,m}$ on every column of the residual matrix $\mathbf{F}_{\text{res}}$. Step (6) selects the specific column of $\mathbf{A}_{s,m}$ that is maximally correlated with the columns of $\mathbf{F}_{\text{res}}$. In step (7), the selected dominant vector is appended to the RF TPC matrix $\mathbf{F}_{\text{RF,m}}$. Upon employing this updated $\mathbf{F}_{\text{RF,m}}$, the classic least squares procedure is exploited for calculating the updated baseband TPC $\mathbf{F}_{\text{BB,m}}$ in step (8). The residual matrix $\mathbf{F}_{\text{res}}$ is updated in step (9). The above steps are repeated $N^m_{\text{RF}}$ times, followed by step (11) that scales the baseband TPC $\mathbf{F}_{\text{BB},m}$ to meet the transmit power constraint, to obtain the hybrid TPCs $\mathbf{F}_{\text{RF},m}$ and $\mathbf{F}_{\text{BB},m}$, corresponding to IoTNo $m$.  Finally, the block diagonal  $MN_T\times MN_{\text{RF}}$ RF TPC $\mathbf{F}_{\text{RF}}$  and the $MN_{\text{RF}}\times Mq$ baseband TPC $\mathbf{F}_{\text{BB}}$ are obtained using steps (13) and (14), respectively.
%which are block-diagonal matrices made up of RF and baseband precoders for the $M$ sensors respectively   as shown in (13) and (14). 
%Algorithm \ref{alg:algorithm1}. 
\begin{algorithm}[ht]
\caption{ Hybrid TPC design procedure for the mmWave MIMO IoTNe}\label{alg:algorithm1}
\begin{algorithmic}[1]
\Require{$\left\lbrace \mathbf{F}_m^{\mathrm{opt}}\right\rbrace$ and $N^m_{\text{RF}}$ },\; $\forall$ $m$
\For{$1 \leq m \leq M$}
\State $\mathbf{F}_{\text{RF},m}=[\  ]$
\State $\mathbf{F}_{\text{res}}=\mathbf{F}_m$
\For{$c\leq N^m_{\text{RF}}$}
   \State$\boldsymbol{\Phi}=\mathbf{A}_{s,m}^H\mathbf{F}_{\text{res}}$
   \State $n={\textrm{arg max}}_{k=1, ..., N_{cl}}[\boldsymbol{\Phi}\boldsymbol{\Phi}^H]_{k,k}$ 
   \State $\mathbf{F}_{\text{RF},m}=[\mathbf{F}_{\text{RF},m}\mid \mathbf{A}_{s,m}^{(n)}]$
   \State $\mathbf{F}_{\text{BB},m}=(\mathbf{F}_{\text{RF},m}^H\mathbf{F}_{\text{RF},m})^{-1}\mathbf{F}_{\text{RF},m}^H\mathbf{F}_m$
   \State $\mathbf{F}_{\text{res}}=\frac{\mathbf{F}_m-\mathbf{F}_{\text{RF},m}\mathbf{F}_{\text{BB},m}}{\|\mathbf{F}_m-\mathbf{F}_{\text{RF},m}\mathbf{F}_{\text{BB},m}\|_F}$
\EndFor
\State $\mathbf{F}_{{\text{BB}},m}=\frac{\mathbf{F}_{{\text{BB}},m}\left\lvert\left\lvert\mathbf{F}_{m}\right\rvert\right\rvert_F}{\left\lvert\left\lvert\mathbf{F}_{{\text{RF}},m}\mathbf{F}_{{\text{BB}},m}\right\rvert\right\rvert_F}$
\State \Return{$\mathbf{F}_{\text{RF},m}$ , $\mathbf{F}_{\text{BB},m}$}
\State $\mathbf{F}_{\text{RF}}=\text{diag}\left(\mathbf{F}_{\text{RF}},\mathbf{F}_{\text{RF},m}\right)$
\State $\mathbf{F}_{\text{BB}}=\text{diag}\left(\mathbf{F}_{\text{BB}},\mathbf{F}_{\text{BB},m}\right)$
\EndFor
\end{algorithmic}
\end{algorithm}
\subsection{Complexity Analysis}
The computational complexities of the various steps of Algorithm 1 are shown in Table \ref{complexity_algo1}.
The complexity of Algorithm 1 given the fully-digital transmit precoder matrix $\mathbf{F}_m$ is of the order of $\mathcal{O}\left(N_{\text{RF}}^mN_TN_{cl}q + (N_{\text{RF}}^m)^2N_T^2\right)$. As shown in Table \ref{complexity_table_algos}, the SOMP algorithm has a low complexity, i.e., $\mathcal{O}\bigg(\left(MqN_T\right)^3+M(N_{\text{RF}}^m)^2N_T^2\bigg)$  in contrast to the manifold optimization [14] or ADMM-based [30] iterative design frameworks, which have a complexity order of $\mathcal{O}\left(\left(MqN_T\right)^3+MN_{\text{RF}}^mN_T^3q^2N_{\text{inn}}N_{\text{out}}\right)$ and $\mathcal{O}\left(M(N_{\text{RF}}^m)^2q^4N_{\text{I}}+M(N_{\text{RF}}^m)^3N_T^3N_{\text{I}}\right)$, respectively.\\ 
\begin{minipage}[c]{0.5\textwidth}
\vspace{10pt}
\centering
\captionof{table}{Computational Complexity of Algorithm 1}
\label{complexity_algo1}
\begin{tabular}{|l|r|}
\hline
Expression & Computational complexity\\
\hline
$\boldsymbol{\Phi}$ & $\mathcal{O}\left(N_{\text{RF}}^mN_{cl}N_Tq\right)$ \\ \hline
 $\boldsymbol{\Phi}\boldsymbol{\Phi}^H$ & $\mathcal{O}\left(N_{\text{RF}}^mN_{cl}^2q\right)$ \\
\hline
$\mathbf{F}_{\text{RF},m}$ & $\mathcal{O}\left(N_{\text{RF}}^m N_{cl}N_T q + N_{\text{RF}}^m N_{cl}^2q\right)$ \\ \hline
$\mathbf{F}_{\text{BB},m}$ & $\mathcal{O}\left(\left(N_{\text{RF}}^m\right)^2N_T^2 + \left(N_{\text{RF}}^m\right)^2N_Tq\right)$ \\ \hline
$\mathbf{F}_{\text{res}}$ & $\mathcal{O}\left(N_{\text{RF}}^mN_Tq \right)$\\ \hline
\end{tabular}
\end{minipage}
\begin{minipage}[c]{0.5\textwidth}
\vspace{10pt}
\centering
\captionof{table}{Computational Complexity Comparison}
\label{complexity_table_algos}
\begin{tabular}{|l|r|}
\hline
Algorithm & Computational complexity\\
\hline
Proposed & $\mathcal{O}\bigg(\left(MqN_T\right)^3+M(N_{\text{RF}}^m)^2N_T^2\bigg)$ \\ \hline
MO-AltMin \cite{yu2016alternating} & $\mathcal{O}\left(\left(MqN_T\right)^3+MN_{\text{RF}}^mN_T^3q^2N_{\text{inn}}N_{\text{out}}\right)$\\
\hline
ADMM \cite{liu2021hybrid} & $\mathcal{O}\left(M(N_{\text{RF}}^m)^2q^4N_{\text{I}}+M(N_{\text{RF}}^m)^3N_T^3N_{\text{I}}\right)$\\ \hline
\end{tabular}
\end{minipage}
\subsection{Centralized MMSE Bound}
The centralized MMSE bound can be derived as described next in order to benchmark the performance of the proposed hybrid
beamforming designs. This is based on the principle that the best MSE performance is obtained, when the observations taken by every
IoTNo in the IoTNe are available directly at the FC without any degradation. For such a hypothetical system, the MMSE estimator  at the FC results in
the best MSE performance  of parameter estimation. The observations
x for the centralized scenario can be expressed as 
\begin{align}
\mathbf{x}=\mathbf{A}\boldsymbol{\theta}+\mathbf{v},
\end{align}
where the observation vector $\mathbf{x}  \in \mathbb{C}^{Mq \times 1}$ and noise vector $\mathbf{v} \in \mathbb{C}^{Mq \times 1}$ are defined as 
\begin{align}
\mathbf{x}&= [\mathbf{x}_1^H,\mathbf{x}_2^H,\hdots,\mathbf{x}_M^H]^H,\\
\mathbf{v}&= [\mathbf{v}_1^H,\mathbf{v}_2^H,\hdots,\mathbf{v}_M^H]^H.
\end{align}
Note that the noise vector $\mathbf{v}$ is distributed as $\mathbf{v} \sim \mathcal{CN} \left(\mathbf{0},\mathbf{R}_{v}\right)$, where the covariance matrix $\mathbf{R}_{v}\in \mathbb{C}^{Mq \times Mq}$ and the overall observation matrix $\mathbf{A} \in \mathbb{C}^{Mq \times p}$ are defined as 
\begin{align}
\mathbf{R}_{v}&=\mathrm{diag}\left[\mathbf{R}_{1},\mathbf{R}_{2},\hdots,\mathbf{R}_{M}\right], \\
\mathbf{A}&=[\mathbf{A}_1,\mathbf{A}_2,\hdots,\mathbf{A}_M].
\end{align}
The MSE of this centralized MMSE estimate is given as \cite{10.5555/151045}
\begin{equation}\label{MMSE_Bench}
\mathrm{MSE}_{\mathrm{MMSE}}=\mathrm{Tr}\left[\left(\mathbf{R}^{-1}_{\theta}+\mathbf{A}^H\mathbf{R}^{-1}_v\mathbf{A}\right)^{-1}\right],
\end{equation}
which is the centralized MMSE bound for the system under consideration. The next section develops robust hybrid beamformer design techniques in the presence of imperfect CSI. 
\section{Robust Hybrid Precoder Designs Relying on Imperfect CSI}\label{Imperfect_CSI}
Due to various factors such as a limited pilot overhead, quantization error etc., uncertainty in the available CSI is inevitable in practice. Therefore, for the practical viability of the designs developed, it is important to conceive robust hybrid pre-processing schemes that take the CSI uncertainty into account. The robust designs thus developed lead to an improved performance in comparison to their CSI uncertainty-agnostic counterparts. Since the uncertainty-agnostic design procedure does not account for the CSI uncertainty,  and designs the TPC/ RCs utilizing purely the available channel estimate, it results in an unacceptable performance degradation. The pertinent models characterizing the CSI error and the corresponding robust design procedures are described next. 
\subsection{Robust Precoder Design under Stochastic CSI Uncertainty}\label{Robust_Stochastic}
Employing the stochastic/ Gaussian CSI uncertainty model\cite{8906245,5356168,4567684}, the channel matrix $\mathbf{H}_m$ can be modelled as 
\begin{equation}\label{Channel_model}
    \mathbf{H}_m = \widehat{\mathbf{H}}_m+\underbrace{\mathbf{R}^{1/2}_{\text{FC}}\mathbf{S}_m\mathbf{R}^{T/2}_{s,m}}_{\Delta{\mathbf{H}}_m},
\end{equation}
where the matrix $\widehat{\mathbf{H}}_m \in \mathbb{C}^{N_R \times N_T}$ denotes the available channel estimate, whereas $\Delta{\mathbf{H}}_m \in \mathbb{C}^{N_R \times N_T}$ represents the estimation error. The matrices  $\mathbf{R}_{\text{FC}} \in \mathbb{C}^{N_R \times N_R}$ and $\mathbf{R}_{s,m} \in \mathbb{C}^{N_T \times N_T}$, represent the receive and transmit correlation matrices corresponding to the FC and the $m$th sensor, repectively\cite{5356168,4567684}. These can be modelled as $\mathbf{R}_{\text{FC}}(i,j)=\sigma_H^2 \alpha^{|i-j|}$ and $\mathbf{R}_{s,m}(i,j)=\beta^{|i-j|}_m$, where $\alpha$ and $\beta_m$ represent the receive and transmit correlation, and $\sigma_H^2$ denotes the uncertainty variance of the CSI. Each element of the matrix $\mathbf{S}_m \in \mathbb{C}^{N_R \times N_T}$  is assumed to be independent and identically (i.i.d.) distributed with mean $0$ and variance one. Substituting $\mathbf{H}_m$ according to the channel model in \eqref{Channel_model}, the equivalent received vector $\widetilde{\mathbf{y}}$ obtained after RF combining at the FC can be rewritten as
\begin{align}
\widetilde{\mathbf{y}}&=\sum_{m = 1}^{M} \mathbf{W}_{\text{RF}}^H \left(\widehat{\mathbf{H}}_m+\Delta{\mathbf{H}}_m\right) \mathbf{F}_m \mathbf{A}_m\boldsymbol{\theta}  \nonumber \\&+ \sum_{m = 1}^{M} \mathbf{W}_{\text{RF}}^H \left(\widehat{\mathbf{H}}_m+\Delta{\mathbf{H}}_m\right) \mathbf{F}_m \mathbf{v}_m +{\mathbf{W}_{\text{RF}}^H\mathbf{u}}.
 \end{align}
The estimation constraint under imperfect CSI availability can be reformulated as 
 \begin{align}\label{Mod_Est_Cons}
\sum_{m=1}^{M}\mathbf{W}_{\text{RF}}^H \widehat{\mathbf{H}}_m \mathbf{F}_m  \mathbf{A}_m=\mathbf{I}_{p}.
\end{align}
Applying the $\mathrm{vec}$ operation at both sides and using the identity $\mathrm{vec}\left(\mathbf{N}\mathbf{P}\mathbf{Q}\right)=\left(\mathbf{Q}^T \otimes \mathbf{N}\right)\mathrm{vec}(\mathbf{P})$, the above constraint can be rewritten as
\begin{align}\label{Mod_Est_Cons_n}
\sum_{m = 1}^{M} \left[\mathbf{A}_m^T \otimes \mathbf{W}_{\text{RF}}^H  \widehat{\mathbf{H}}_m \right] \mathrm{vec}\left[\mathbf {F}_m\right]=\sum_{m = 1}^{M} \mathbf{W}_m \mathbf{f}_m=\mathbf{W}\mathbf{f}=\mathbf{c},
\end{align}
where we have $\mathbf{W}_m=\left[\mathbf{A}_m^T \otimes \mathbf{W}_{\text{RF}}^H  \widehat{\mathbf{H}}_m \right] \in \mathbb{C}^{p^2 \times qN_T}$ and $\mathbf{W}=[\mathbf{W}_1, \mathbf{W}_2,\hdots,\mathbf{W}_M] \in \mathbb{C}^{p^2 \times MqN_T}$. The resultant MSE of parameter estimation is given by
\begin{align}\label{MSE_2}
\mathrm{MSE}&=\sum_{m = 1}^{M} \mathrm{Tr} \Big[\mathbf{W}_{\text{RF}}^H \Delta\mathbf{H}_m \mathbf{F}_m \mathbf{A}_m \mathbf{R}_{\boldsymbol{\theta}} \mathbf{A}_m^H \mathbf{F}_m^H  \Delta\mathbf{H}_m^H \mathbf{W}_{\text{RF}} \Big]\nonumber \\ &+ \mathrm{Tr} \Big[\mathbf{W}_{\text{RF}}^H  \mathbf{H}_m \mathbf{F}_m \mathbf{R}_{m} \mathbf{F}_m^H  \mathbf{H}_m^H \mathbf{W}_{\text{RF}} \Big]+\mathrm{Tr}\left[\mathbf{W}_{\text{RF}}^H \mathbf{R}_u \mathbf{W}_{\text{RF}}\right].
\end{align}
The following lemma can now be used to derive the average-MSE defined as $\overline{\mathrm{MSE}}=\mathbb{E}\left[\mathrm{MSE}\right]$:
\begin{lem}
For a matrix $\mathbf{X} \in \mathbb{C}^{r \times t}$, which obey the distribution $\mathbf{X} \sim \mathcal{CN}\left(\widehat{\mathbf{X}},\mathbf{R}_t \otimes \mathbf{R}_r\right)$, where $\mathbf{R}_r \in \mathbb{C}^{r \times r}$ and $\mathbf{R}_t \in \mathbb{C}^{t \times t}$ represent the receive and transmit correlation matrices, respectively, and a compatible matrix $\mathbf{Z}$, it follows that\cite{Gupta2018matrix}
\begin{align}
\mathbb{E}\left[\mathbf{X}\mathbf{Z}\mathbf{X}^H\right]&=\widehat{\mathbf{X}}\mathbf{Z}\widehat{\mathbf{X}}^H+\mathrm{Tr}\left[\mathbf{Z}\mathbf{R}_r\right]\mathbf{R}^T_t \nonumber \\
\mathbb{E}\left[\mathbf{X}^H\mathbf{Z}\mathbf{X}\right]&=\widehat{\mathbf{X}}^H\mathbf{Z}\widehat{\mathbf{X}}+\mathrm{Tr}\left[\mathbf{Z}\mathbf{R}^T_t\right]\mathbf{R}_r.
\end{align}
\end{lem}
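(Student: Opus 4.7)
The plan is to reduce both identities to the basic second-moment formulas for a matrix with i.i.d.\ complex Gaussian entries, namely $\mathbb{E}[\mathbf{S}\mathbf{M}\mathbf{S}^H]=\mathrm{Tr}(\mathbf{M})\mathbf{I}_r$ and the analogous $\mathbb{E}[\mathbf{S}^H\mathbf{N}\mathbf{S}]=\mathrm{Tr}(\mathbf{N})\mathbf{I}_t$, via a whitening decomposition of $\mathbf{X}$. Writing $\mathbf{X}=\widehat{\mathbf{X}}+\Delta\mathbf{X}$ with $\Delta\mathbf{X}\sim\mathcal{CN}(\mathbf{0},\mathbf{R}_t\otimes\mathbf{R}_r)$, I would exploit the stochastic representation $\Delta\mathbf{X}=\mathbf{R}_r^{1/2}\mathbf{S}\mathbf{R}_t^{T/2}$ that mirrors the channel uncertainty model in \eqref{Channel_model}, where $\mathbf{S}\in\mathbb{C}^{r\times t}$ has i.i.d.\ zero-mean, unit-variance complex Gaussian entries. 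Verifying that this representation reproduces the correct covariance $\mathbf{R}_t\otimes\mathbf{R}_r$ of $\mathrm{vec}(\Delta\mathbf{X})$ is a one-line Kronecker identity and is the only distributional fact needed.

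For the first identity, I would expand $\mathbf{X}\mathbf{Z}\mathbf{X}^H$ into four bilinear pieces: the deterministic term $\widehat{\mathbf{X}}\mathbf{Z}\widehat{\mathbf{X}}^H$, the two cross terms $\widehat{\mathbf{X}}\mathbf{Z}\Delta\mathbf{X}^H$ and $\Delta\mathbf{X}\mathbf{Z}\widehat{\mathbf{X}}^H$, and the pure fluctuation term $\Delta\mathbf{X}\mathbf{Z}\Delta\mathbf{X}^H$. The two cross terms have zero expectation because $\mathbb{E}[\Delta\mathbf{X}]=\mathbf{0}$. Substituting the factorisation into the remaining term produces $\mathbf{R}_r^{1/2}\,\mathbb{E}\bigl[\mathbf{S}\,(\mathbf{R}_t^{T/2}\mathbf{Z}\mathbf{R}_t^{T/2,H})\,\mathbf{S}^H\bigr]\,\mathbf{R}_r^{1/2,H}$, which collapses through the i.i.d.\ identity and the cyclic property of the trace to the announced correction term; the same computation carried out entrywise, using $\mathbb{E}[X_{ij}\overline{X_{kl}}]=[\mathbf{R}_t]_{jl}[\mathbf{R}_r]_{ik}+\widehat{X}_{ij}\overline{\widehat{X}_{kl}}$, provides an independent cross-check.

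The second identity follows the same recipe with the row and column roles swapped. After the cross terms vanish one is left with $\mathbb{E}[\Delta\mathbf{X}^H\mathbf{Z}\Delta\mathbf{X}]=\mathbf{R}_t^{T/2,H}\,\mathbb{E}[\mathbf{S}^H(\mathbf{R}_r^{1/2,H}\mathbf{Z}\mathbf{R}_r^{1/2})\mathbf{S}]\,\mathbf{R}_t^{T/2}$, which reduces via $\mathbb{E}[\mathbf{S}^H\mathbf{N}\mathbf{S}]=\mathrm{Tr}(\mathbf{N})\mathbf{I}_t$ to a term proportional to $\mathbf{R}_t$ with prefactor $\mathrm{Tr}(\mathbf{Z}\mathbf{R}_r)$, matching the claim. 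The principal obstacle is simply bookkeeping: keeping the transposes and conjugate transposes of the correlation factors consistent with the paper's Kronecker convention $\mathrm{vec}(\mathbf{X})\sim\mathcal{CN}(\cdot,\mathbf{R}_t\otimes\mathbf{R}_r)$, since each of the two resulting scalar traces is sensitive to whether $\mathbf{R}_t$ appears transposed; once that convention is pinned down, both identities drop out mechanically.
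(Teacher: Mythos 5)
Your overall strategy is sound, and it is worth noting that the paper itself contains no proof of this lemma at all: it is simply cited from the matrix-variate distribution literature. A self-contained derivation like yours, splitting $\mathbf{X}=\widehat{\mathbf{X}}+\Delta\mathbf{X}$, using the stochastic representation $\Delta\mathbf{X}=\mathbf{R}_r^{1/2}\mathbf{S}\mathbf{R}_t^{T/2}$ that mirrors \eqref{Channel_model}, killing the cross terms by zero mean, and invoking the i.i.d.\ second-moment identities, is the right way to establish it, and your entrywise covariance formula $\mathbb{E}[\Delta X_{ij}\overline{\Delta X_{kl}}]=[\mathbf{R}_t]_{jl}[\mathbf{R}_r]_{ik}$ is the correct reading of the convention $\mathrm{cov}\left(\mathrm{vec}(\mathbf{X})\right)=\mathbf{R}_t\otimes\mathbf{R}_r$.

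The gap is in the last step of each computation: you assert that the result ``matches the claim,'' and it does not. Carried out correctly, your own calculation produces the two identities with their correction terms interchanged relative to the lemma as printed. Explicitly, your first computation collapses to
\begin{align}
\mathbf{R}_r^{1/2}\,\mathrm{Tr}\left[\mathbf{R}_t^{T/2}\mathbf{Z}\mathbf{R}_t^{T/2,H}\right]\mathbf{I}_r\,\mathbf{R}_r^{1/2,H}
=\mathrm{Tr}\left[\mathbf{Z}\mathbf{R}_t^{T}\right]\mathbf{R}_r,\nonumber
\end{align}
not $\mathrm{Tr}\left[\mathbf{Z}\mathbf{R}_r\right]\mathbf{R}_t^T$, and your second computation gives $\mathrm{Tr}\left[\mathbf{Z}\mathbf{R}_r\right]\mathbf{R}_t^T$ (proportional to $\mathbf{R}_t^T$, incidentally, not $\mathbf{R}_t$), not $\mathrm{Tr}\left[\mathbf{Z}\mathbf{R}_t^T\right]\mathbf{R}_r$. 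In other words, what your method actually proves is
\begin{align}
\mathbb{E}\left[\mathbf{X}\mathbf{Z}\mathbf{X}^H\right]&=\widehat{\mathbf{X}}\mathbf{Z}\widehat{\mathbf{X}}^H+\mathrm{Tr}\left[\mathbf{Z}\mathbf{R}_t^T\right]\mathbf{R}_r,\nonumber\\
\mathbb{E}\left[\mathbf{X}^H\mathbf{Z}\mathbf{X}\right]&=\widehat{\mathbf{X}}^H\mathbf{Z}\widehat{\mathbf{X}}+\mathrm{Tr}\left[\mathbf{Z}\mathbf{R}_r\right]\mathbf{R}_t^T.\nonumber
\end{align}
This is not a defect of your derivation but of the statement: as printed, the lemma is dimensionally inconsistent whenever $r\neq t$, since in $\mathbb{E}[\mathbf{X}\mathbf{Z}\mathbf{X}^H]$ the matrix $\mathbf{Z}$ must be $t\times t$ and the result $r\times r$, so $\mathrm{Tr}[\mathbf{Z}\mathbf{R}_r]$ is undefined and $\mathbf{R}_t^T$ has the wrong size. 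This is exactly the transpose-and-placement bookkeeping you yourself flagged as the main obstacle, but it must be resolved explicitly rather than waved through: your write-up should state that the printed lemma has its two trace factors swapped and that what you prove is the corrected version above. Reassuringly, the paper's subsequent use of the lemma in the average-MSE expression, where the transmit-side trace $\mathrm{Tr}\left[\mathbf{F}_m\mathbf{A}_m\mathbf{R}_{\theta}\mathbf{A}_m^H\mathbf{F}_m^H\mathbf{R}_{s,m}^T\right]$ multiplies the receive-side quantity $\mathrm{Tr}\left[\mathbf{W}_{\text{RF}}^H\mathbf{R}_{\text{FC}}^T\mathbf{W}_{\text{RF}}\right]$, is consistent with your corrected version (the transposes there are immaterial because the modeled correlation matrices are real and symmetric), not with the lemma as printed.
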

\hspace{-9pt}Employing the above lemma, one can simplify the MSE expression in \eqref{MSE_2} to determine the average MSE as 
\begin{align}
&\overline{\mathrm{MSE}}=\sum_{m = 1}^{M} \mathrm{Tr}\left[\mathbf{F}_m \mathbf{A}_m \mathbf{R}_{\boldsymbol{\theta}} \mathbf{A}_m^H \mathbf{F}_m^H \mathbf{R}^T_{s,m}\right]\mathrm{Tr}\left[\mathbf{W}_{\text{RF}}^H\mathbf{R}^T_{\text{FC}}\mathbf{W}_{\text{RF}}\right] \nonumber \\&+\mathrm{Tr}\left[\mathbf{W}_{\text{RF}}^H  \widehat{\mathbf{H}}_m \mathbf{F}_m \mathbf{R}_{m} \mathbf{F}_m^H  \widehat{\mathbf{H}}_m^H \mathbf{W}_{\text{RF}}\right] +\mathrm{Tr}\left[ \mathbf{F}_m \mathbf{R}_{m} \mathbf{F}_m^H \mathbf{R}^T_{s,m}\right]\nonumber \\ &\left[\mathbf{W}_{\text{RF}}^H\mathbf{R}^T_{\text{FC}}\mathbf{W}_{\text{RF}}\right]+ \mathrm{Tr}\left[\mathbf{W}_{\text{RF}}^H\mathbf{R}_{u}\mathbf{W}_{\text{RF}}\right].
\end{align}
By exploiting the property in \eqref{Trace_vec}, the above expression can be further simplified to:
\begin{align}
\overline{\mathrm{MSE}}&=\sum_{m = 1}^{M} \mathbf{\alpha} \mathbf{f}_m^H \left[\left(\mathbf{A}_m \mathbf{R}_{\theta} \mathbf{A}_m^H \right)^T \otimes \mathbf{R}_{s,m}^T\right]\mathbf{f}_m \nonumber \\&+\mathbf{f}_m^H \bigg[\mathbf{R}_{m}^T \otimes \widehat{\mathbf{H}}_m^H \mathbf{W}_{\text{RF}} \mathbf{W}_{\text{RF}}^H\widehat{\mathbf{H}}_m\bigg]\mathbf{f}_m \nonumber \\& +\alpha \mathbf{f}_m^H \left[\mathbf{R}_{m}^T \otimes \mathbf{R}_{s,m}^T \right]\mathbf{f}_m+\mathrm{Tr}\left[\mathbf{W}_{\text{RF}}^H \mathbf{R}_u \mathbf{W}_{\text{RF}}\right]. 
\end{align}
The above expression can be recast as
\begin{align}\label{Avg_MSE}
\overline{\mathrm{MSE}}&=\sum_{m = 1}^{M} \mathbf{f}^H_m \mathbf{\Omega}_m \mathbf{f}_m +\mathrm{Tr}\left[\mathbf{W}_{\text{RF}}^H \mathbf{R}_u \mathbf{W}_{\text{RF}}\right]\nonumber \\ &= \mathbf{f}^H \mathbf{\Omega} \mathbf{f} +\mathrm{Tr}\left[\mathbf{W}_{\text{RF}}^H \mathbf{R}_u \mathbf{W}_{\text{RF}}\right],
\end{align}
wherein $\mathbf{J}_m=\left(\mathbf{A}_m \mathbf{R}_{\theta} \mathbf{A}_m^H \right)^T \otimes \mathbf{R}_{s,m}^T  \in \mathbb{C}^{qN_T \times qN_T } $, $\mathbf{T}_m=\mathbf{R}_{m}^T \otimes \mathbf{R}_{s,m}^T  \in \mathbb{C}^{qN_T  \times qN_T }$, $\mathbf{L}_m=\mathbf{R}_{m}^T \otimes \widehat{\mathbf{H}}_m^H \mathbf{W}_{\text{RF}} \mathbf{W}_{\text{RF}}^H\widehat{\mathbf{H}}_m \in \mathbb{C}^{qN_T  \times qN_T }$, $\mathbf{\Omega}_m=\mathbf{L}_m+\mathbf{\alpha}\left(\mathbf{J}_m+\mathbf{T}_m\right) \in \mathbb{C}^{qN_T  \times qN_T }$ and the scalar quantity $\alpha=\mathrm{Tr}\left[\mathbf{W}_{\text{RF}}^H\mathbf{R}^T_{\text{FC}}\mathbf{W}_{\text{RF}}\right]$. The block diagonal matrix $\mathbf{\Omega} \in \mathbb{C}^{MqN_T \times MqN_T}$ is defined as
\begin{align}
\mathbf{\Omega}=\mathrm{diag}\left[\mathbf{\Omega}_1,\mathbf{\Omega}_2,\hdots,\mathbf{\Omega}_M\right].
\end{align}
Hence, the optimization problem of minimizing the MSE in \eqref{Avg_MSE} subject to the modified estimation constraint in \eqref{Mod_Est_Cons_n} is given by
\begin{eqnarray}\label{eqaa'}
\underset{\mathbf{f}}{\text{minimize}}& \mathbf{f}^H\mathbf{\Omega}\mathbf{f} \nonumber\\
\text{subject to} & {\mathbf{W}}{\mathbf{f}} = {\mathbf{c}}.
\end{eqnarray}
Similar to \eqref{MSE_Min}, the above average MSE minimization problem can also be solved using the popular KKT framework. The corresponding closed-form solution of the robust TPC vector can be attained by replacing the matrices $\mathbf{\Psi}$, $\mathbf{Z}$ and the vector $\mathbf{b}$ in \eqref{Opt_Precoder} by the matrix $\mathbf{\Omega}$, $\mathbf{W}$ and $\mathbf{c}$, respectively. Subsequently, one can follow the procedure mentioned after \eqref{Opt_Precoder} to determine the TPC matrices $\mathbf{F}_m$ corresponding to each IoTNo $m$. Furthermore, the SOMP algorithm can once again be employed for decomposing the robust TPC matrix $\mathbf{F}_m$ to obtain the RF and baseband TPC matrices corresponding to each IoTNo $m$. The next subsection models the CSI uncertainty using the popular norm ball model\cite{1468482}, followed by the robust TPC design to mitigate the resultant performance loss.
\subsection{Robust Precoder Design under Norm Ball CSI Uncertainty}\label{Robust_Bounded}
The popular norm ball/ bounded CSI uncertainty model \cite{1468482} for the channel between each IoTNo $m$ and the FC is given by
\begin{equation}
\mathbf{H}_m= \widehat{\mathbf{H}}_m+{\Delta{\mathbf{H}}_m},
\end{equation}
where ${\| {\Delta{\mathbf{H}}}_m\|}_F \leq \epsilon_{H}$. The quantity $\epsilon_{H}$ represents the uncertainty radius. Interestingly, the norm ball CSI uncertainty model enables the minimization of the worst-case MSE, thus ensuring robustness. Toward this end, once again exploiting the property in (14), the MSE expression in \eqref{MSE_2} can be recast as
\begin{equation}\label{111}
\mathrm{MSE}=\sum_{m = 1}^{M} \mathbf{f}_m^H \mathbf{D}_m \mathbf{f}_m+\mathbf{f}_m^H \mathbf{S}_m \mathbf{f}_m +\mathrm{Tr}\left[\mathbf{W}_{\text{RF}}^H \mathbf{R}_u \mathbf{W}_{\text{RF}}\right],
\end{equation}
where the matrices  $\mathbf{S}_m \in\mathbb{C}^{qN_T \times qN_T}$ and $\mathbf{D}_m \in \mathbb{C}^{qN_T\times qN_T}$ are defined as
\begin{align}
\mathbf{D}_m&=\left[\left(\mathbf{A}_m\mathbf{R}_{\theta} \mathbf{A}_m^H \right)^T\otimes\Delta\mathbf{H}_m^H \mathbf{W}_{\text{RF}}\mathbf{W}_{\text{RF}}^H\Delta\mathbf{H}_m\right],\nonumber \\
\mathbf{S}_m&=\left[\mathbf{R}_{m}^T\otimes\mathbf{H}_m^H \mathbf{W}_{\text{RF}} \mathbf{W}_{\text{RF}}^H\mathbf{H}_m \right].
\end{align}
By exploiting the property $\left(\mathbf{AB} \otimes \mathbf{CD}\right)=\left(\mathbf{A} \otimes  \mathbf{C}\right)\left(\mathbf{B} \otimes  \mathbf{D}\right)$ \cite{zhang2017matrix}, the matrices $\mathbf{D}_m$ and $\mathbf{S}_m$ can be decomposed as
\begin{align}
\mathbf{D}_m&=\mathbf{G}_m^H \mathbf{G}_m, \\
\mathbf{S}_m&=\mathbf{L}_m^H \mathbf{L}_m,
\end{align}
where the matrices $\mathbf{G}_m=\left[\left( \mathbf{R}_{\theta}^{1/2}\mathbf{A}_m^H\right)\otimes\mathbf{W}_{\text{RF}}^H\Delta\mathbf{H}_m\right] \in\mathbb{C}^{qN_T \times qN_T}$ and $\mathbf{L}_m=\left[\mathbf{R}_{m}^{1/2} \otimes  \mathbf{W}_{\text{RF}}^H\mathbf{H}_m \right] \in \mathbb{C}^{qN_T\times qN_T}$. The MSE expression in \eqref{111} can be further simplified as
\begin{align}\label{11111}
\mathrm{MSE}&=\sum_{m = 1}^{M} \mathbf{f}_m^H \mathbf{G}_m^H \mathbf{G}_m \mathbf{f}_m+\mathbf{f}_m^H \mathbf{L}_m^H \mathbf{L}_m \mathbf{f}_m +\mathrm{Tr}\left[\mathbf{W}_{\text{RF}}^H \mathbf{R}_u \mathbf{W}_{\text{RF}}\right]\nonumber \\
&=\sum_{m = 1}^{M} {\| \mathbf{G}_m \mathbf{f}_m\|}^2+{\| \mathbf{L}_m \mathbf{f}_m\|}^2+\mathrm{Tr}\left[\mathbf{W}_{\text{RF}}^H \mathbf{R}_u \mathbf{W}_{\text{RF}}\right]\nonumber \\
&={\| \mathbf{G}\mathbf{f}\|}^2+{\| \mathbf{L}\mathbf{f}\|}^2+\mathrm{Tr}\left[\mathbf{W}_{\text{RF}}^H \mathbf{R}_u \mathbf{W}_{\text{RF}}\right],
\end{align}
where the block-diagonal matrices $\mathbf{G}\in \mathbb{C}^{MqN_T\times MqN_T}$ and $\mathbf{L}\in \mathbb{C}^{MqN_T\times MqN_T}$ are defined as $\mathbf{G}=\mathrm{diag}\left[\mathbf{G}_1,\mathbf{G}_2,\hdots,\mathbf{G}_M\right]$ and $\mathbf{L}=\mathrm{diag}\left[\mathbf{L}_1,\mathbf{L}_2,\hdots,\mathbf{L}_M\right]$, respectively. By exploiting the property $ \mathbf{\| Ax\|}_2 \leq \mathbf{\| A\|}_F \mathbf{\| x\|}_2$ \cite{zhang2017matrix}, the first term of the MSE expression in \eqref{11111} can be further bounded as 
\begin{equation}
{\| \mathbf{G}\mathbf{f}\|} \leq {\| \mathbf{G}\|}_F {\| \mathbf{f}\|}_2,
\end{equation}
where the quantity $\mathbf{\| G\|}_F=\left[\mathrm{Tr} \left[\mathbf{G}^H \mathbf{G}\right]\right]^{1/2}=\left[\sum_{m=1}^M\mathrm{Tr}\left[\mathbf{G}^H_m\mathbf{G}_m\right]\right]^{1/2}$. Furthermore, exploiting the property $\mathrm{Tr}\left[\mathbf{A} \otimes \mathbf{B}\right]=\mathrm{Tr}\left[\mathbf{A}\right]\mathrm{Tr}\left[\mathbf{B}\right]$, one can simplify the quantity $\mathbf{\| G\|}^2_F$ as
\begin{align}
&\mathbf{\| G\|}^2_F=\sum_{m = 1}^{M} \mathrm{Tr} \left[\left[\mathbf{A}_m \mathbf{R}_{\theta} \mathbf{A}_m^H \right]^T\right] \mathrm{Tr}\left[\Delta\mathbf{H}_m^H \mathbf{W}_{\text{RF}} \mathbf{W}_{\text{RF}}^H \Delta\mathbf{H}_m\right] \nonumber \\
&\leq  \epsilon_{H}^2 \left[\sum_{m = 1}^{M} \mathrm{Tr} \left[\mathbf{A}_m \mathbf{R}_{\theta} \mathbf{A}_m^H \right]^T \lambda_{max}\left(\mathbf{W}_{\text{RF}} \mathbf{W}_{\text{RF}}^H  \right)\right] = \eta^2,
\end{align}
where the above inequality follows from $\mathrm{Tr} \left[\mathbf{A} \mathbf{B}\right] \leq \lambda_{\mathrm{max}}\left(\mathbf{A}\right)\mathrm{Tr}\left[\mathbf{B}\right]$ and $\mathrm{Tr} \left[\Delta\mathbf{H}_m \Delta\mathbf{H}_m^H\right]\leq \epsilon_{H}^2$ \cite{zhang2017matrix}. Hence, one can upper-bound ${\| \mathbf{G}\mathbf{f}\|}^2_F$ as
\begin{equation}
 {\| \mathbf{G}\mathbf{f}\|}^2_F \leq \eta^2 {\| \mathbf{f}\|}_2^2.
 \end{equation}
 \begin{figure*}
\centering
\subfloat[]{\includegraphics[width=8.5cm, height=6.75 cm]{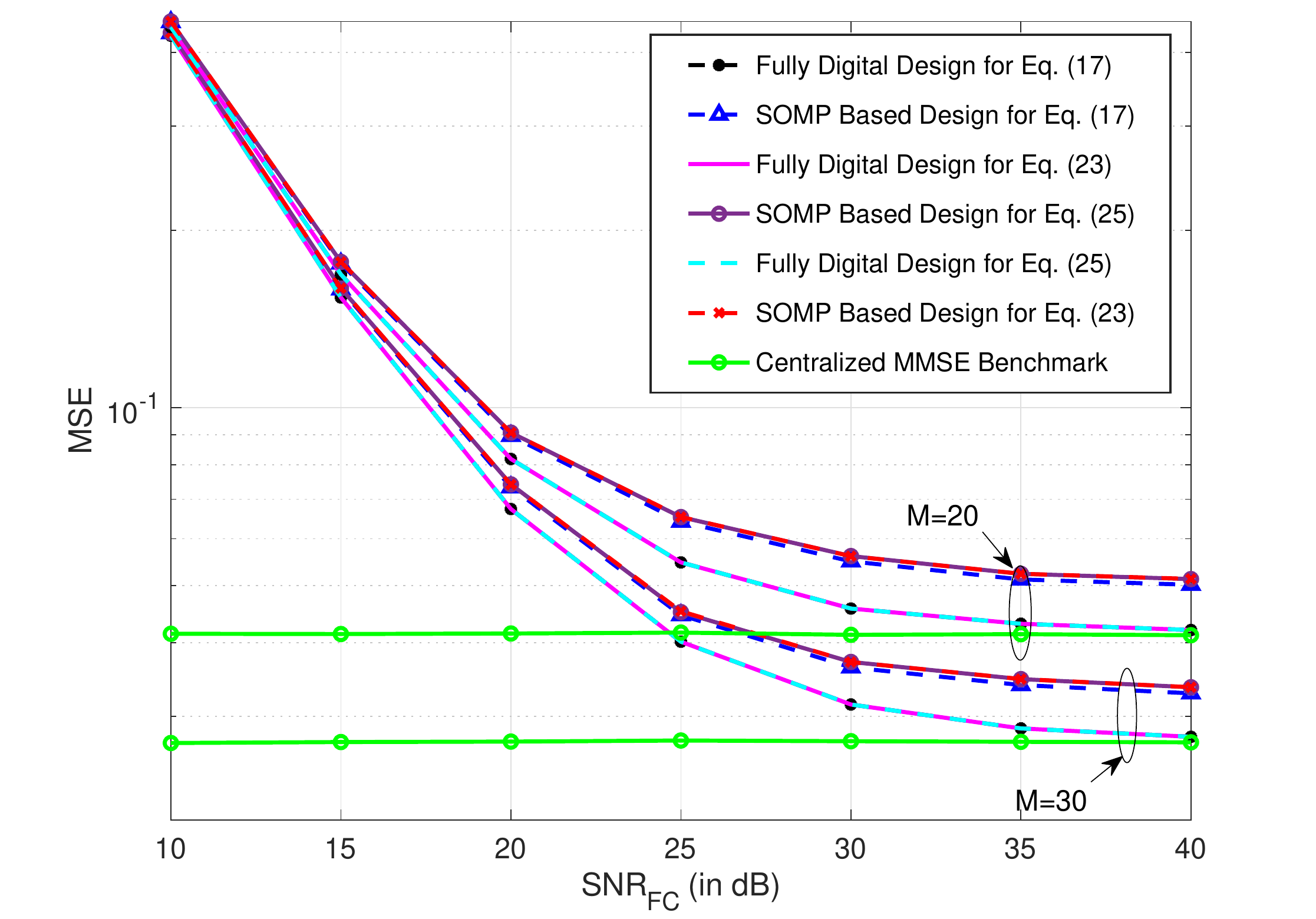}} 
\hfil
\hspace{-15pt}\subfloat[]{\includegraphics[width=8.5cm, height=6.75 cm]{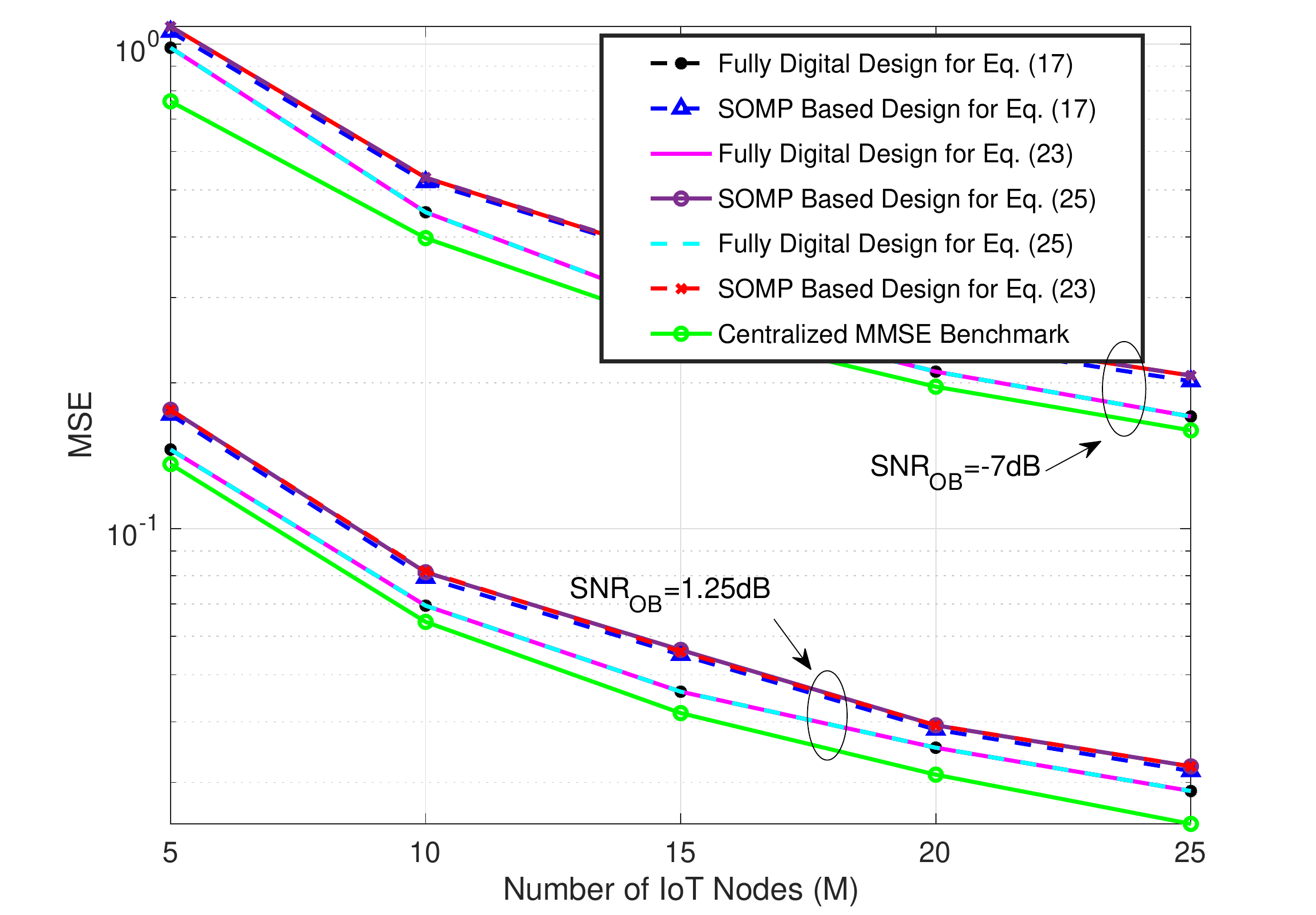}}
\hfil
\caption{$\left(a\right)$ MSE versus $\text{SNR}_{\text{FC}}$ {$ \left(b\right)$ MSE performance with varying number of IoTNos ($M$) for the scenario with perfect CSI.}}
\label{MVDP}
\end{figure*}
Substituting $\mathbf{H}_m = \widehat{\mathbf{H}}_m+\Delta\mathbf{H}_m$ into the expression of matrix $\mathbf{L}_m$, one obtains  $\mathbf{L}_m= \widehat{\mathbf{L}}_m+\Delta\mathbf{L}_m$, where 
\begin{align*}
\widehat{\mathbf{L}}_m=\left[\mathbf{R}_{m}^{1/2} \otimes  \mathbf{W}_{\text{RF}}^H \widehat{\mathbf{H}}_m \right],
\Delta{\mathbf{L}}_m=\left[\mathbf{R}_{m}^{1/2} \otimes  \mathbf{W}_{\text{RF}}^H \Delta{\mathbf{H}}_m\right].
\end{align*}
The term $ {\| \mathbf{L}\mathbf{f}\|}^2$ can be upper-bounded as 
 \begin{align}\label{1111}
 \mathbf{\| Lf\|}^2&= \left\lvert \left\lvert\left(\widehat{\mathbf{L}}+\Delta{\mathbf{L}}\right)\mathbf{f}\right\rvert \right\rvert^2 \leq \left(\mathbf{\left\lvert \left\lvert \widehat{\mathbf{L}}f\right\rvert \right\rvert}+\mathbf{\left\lvert \left\lvert \Delta\mathbf{L} f\right\rvert \right\rvert}\right)^2 \nonumber \\&\leq \left(\mathbf{\left\lvert \left\lvert \widehat{\mathbf{L}}f\right\rvert \right\rvert}+\mathbf{\left\lvert \left\lvert\Delta{L}\right\rvert \right\rvert}_F \mathbf{\| f\|} \right)^2,
 \end{align}
where the first inequality follows from the triangle inequality and the second inequality follows from $ \mathbf{\| Ax\|}_2 \leq \mathbf{\| A\|}_F \mathbf{\| x\|}_2$. The quantity $\mathbf{\| \Delta{L}\|}_F^2$ can be further upper bounded as
 \begin{align}
\mathbf{\| \Delta{L}\|}_F^2&=\mathrm{Tr} \left[\Delta\mathbf{L} \Delta\mathbf{L}^H\right]\nonumber \\ &={\sum_{m = 1}^{M} \mathrm{Tr} \left[ \mathbf{R}_{m}^T \otimes  \Delta\mathbf{H}_m^H \mathbf{W}_{\text{RF}} \mathbf{W}_{\text{RF}} \Delta\mathbf{H}_m \right]}\leq \zeta^2,
\end{align} 
where the constant $\zeta$ is defined as
\begin{align}
\zeta=\epsilon_{H} \left[\sum_{m = 1}^{M} \mathrm{Tr} \left[ \mathbf{R}_{m}\right] \lambda_{\text{max}}\left[\mathbf{W}_{\text{RF}} \mathbf{W}_{\text{RF}}^H  \right]\right]^{1/2}.
\end{align}
Hence, the upper bound in \eqref{1111} reduces to
\begin{align}
\mathbf{\| Lf\|}^2 \leq \left[\mathbf{\left\lvert \left\lvert \widehat{\mathbf{L}}f\right\rvert \right\rvert}+\zeta \mathbf{\| f\|} \right]^2.
\end{align}
Finally, the MSE in \eqref{11111} can be upper-bounded as
\begin{equation}\label{wors_MSE}
\mathrm{MSE} \leq \left[\mathbf{\left\lvert \left\lvert \widehat{\mathbf{L}}f\right\rvert \right\rvert}+\zeta \mathbf{\| f\|} \right]^2 + \eta^2 \mathbf{\| f\|}_2.
 \end{equation}
Using the deductions above, the optimization problem of minimizing the MSE in \eqref{wors_MSE} obeying the modified constraint in \eqref{Mod_Est_Cons} is formulated as 
\begin{eqnarray}\label{eqaa'}
\underset{\mathbf{f}}{\text{minimize}}& \left[\mathbf{\left\lvert \left\lvert \widehat{\mathbf{L}}f\right\rvert \right\rvert}+\zeta \mathbf{\| f\|} \right]^2 + \eta^2 \mathbf{\| f\|}_2 \nonumber\\
\text{subject to} & {\mathbf{W}}{\mathbf{f}} = {\mathbf{c}}.
\end{eqnarray}
The convexity of the above optimization problem renders its solution easy to determine using a suitable convex solver. Once again, one can follow the procedure mentioned after \eqref{Opt_Precoder} to determine the TPC matrices $\mathbf{F}_m$ corresponding to each IoTNo $m$. Furthermore, the SOMP algorithm can once again be employed for factorizing the robust TPC matrix $\mathbf{F}_m$ to obtain the RF and baseband TPC matrices corresponding to each IoTNo $m$. The overall procedure of linear decentralized estimation in an IoTNe can be summarized as follows.
Since, the IoTNos are small, battery-powered devices with a finite battery life and limited processing power. The FC, by contrast, does not face such power constraints and typically has higher processing and communication capacities to meet the demands of the IoT setup. The precise steps for obtaining the instantaneous CSI and parameter estimates are as follows. 
\begin{itemize}
    \item Each IoTNo begins by transmitting pilot symbols to the FC during each coherence time period for channel estimation.
    \item On reception of the pilots from each IoTNo $m$, the FC estimates the channels $\lbrace\mathbf{H}_m\rbrace_{m=1}^M$ corresponding to each IoTNo. 
    \item  Following this step, the FC designs the hybrid TPC matrices for each IoTNo and feeds back to each IoTNo its hybrid TPC.
\end{itemize}
In addition, it is also important to note that the FC is only required to have access only to the statistical CSI, such as the observation noise covariance matrix $\mathbf{R}_{m}$ and the vector parameter covariance matrix $\mathbf{R}_{\theta}$, which may be obtained by averaging over an appropriately long interval of time. Additionally, since a quasi-static mmWave MIMO channel is considered in this work, the CSI obtained at the FC remains constant over a period of time. Hence, the hybrid TPCs $\mathbf{F}_{\text{BB},m}$, $\mathbf{F}_{\text{RF},m}$ does not have to be computed and fed-back frequently. To further reduce the overhead of CSI feedback, robust hybrid TPC designs have been proposed that use either imperfect or a rough estimate of the CSI. The robust model captures the relationship between the overhead in the number of pilot symbols required and the channel estimation error. For example, in the stochastic CSI uncertainty model,  the channel estimation error decreases upon increasing the number of pilot symbols. The next section discusses the numerical results characterizing the performance of the various designs proposed in this work.
\begin{figure}
\centering
\includegraphics[width=8.5cm, height=6.75 cm]{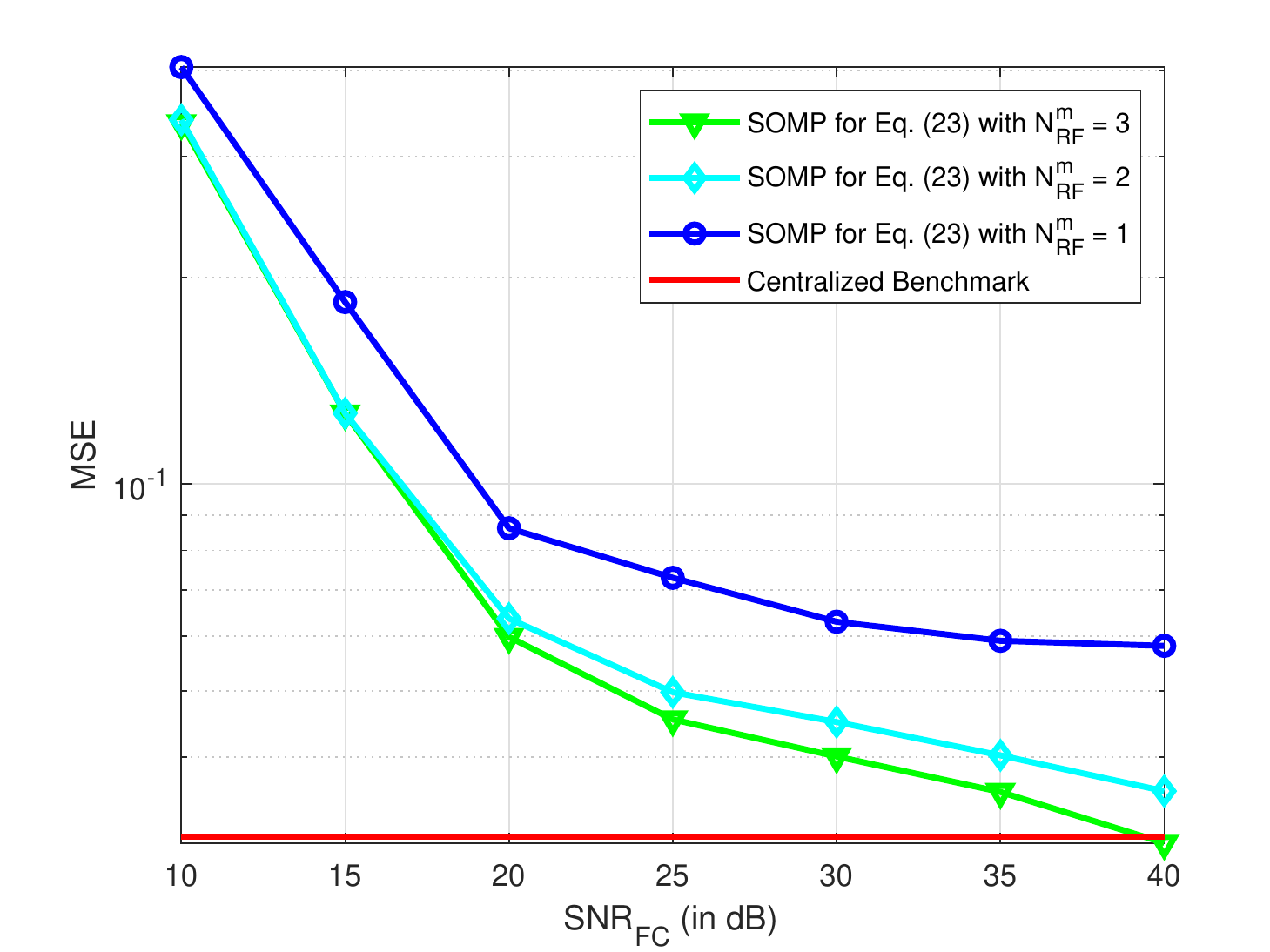}
\caption{MSE versus $\text{SNR}_{\text{FC}}$ for different values of RF chains $N_{\text{RF}}^m$ at each IoTNo.}
\label{fig:mse_snr_rf}
\end{figure}
\begin{figure*}
\centering
\subfloat[]{\includegraphics[width=8.5cm, height=6.75 cm]{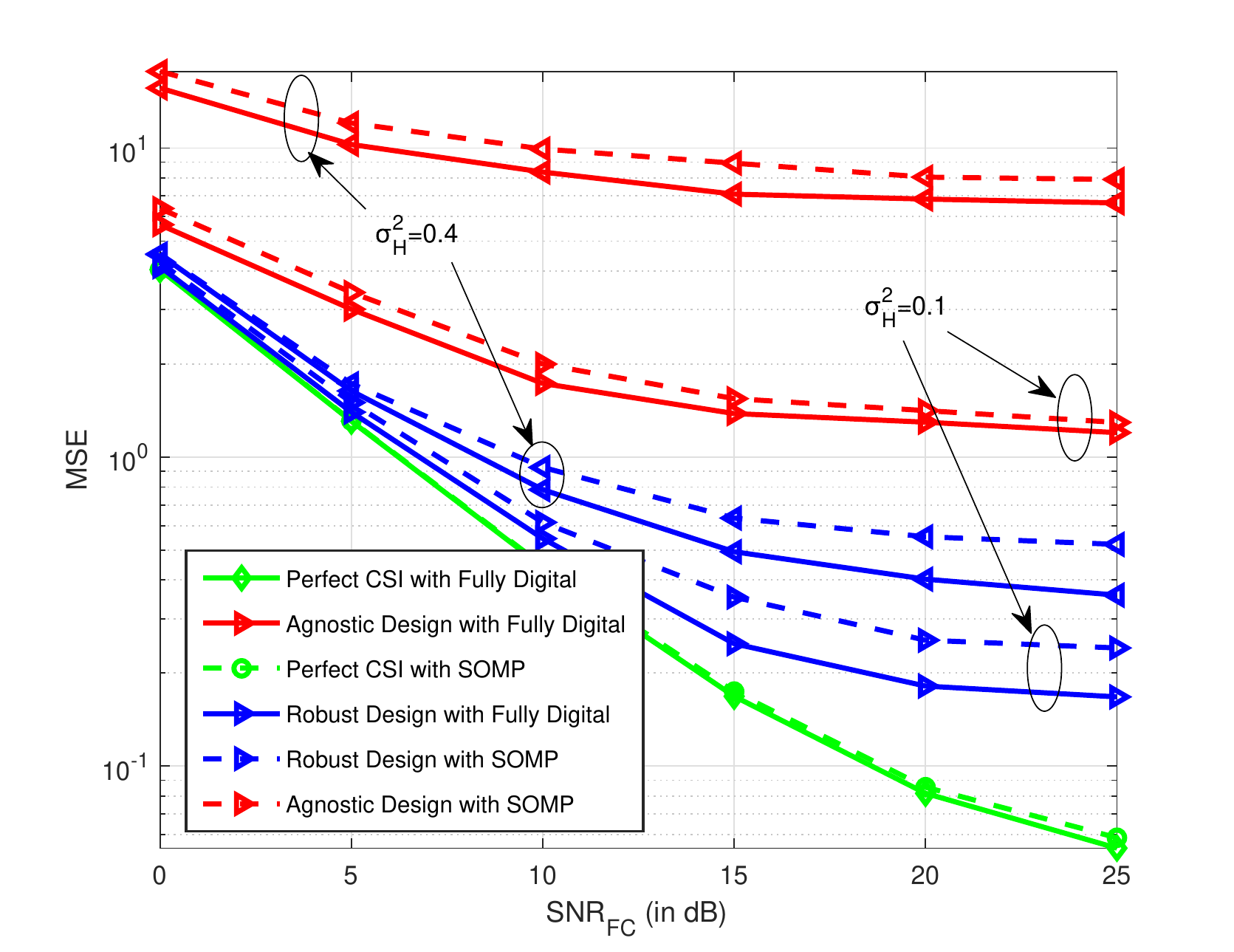}}
\hfil
\hspace{-15pt}\subfloat[]{\includegraphics[width=8.5cm, height=6.75 cm]{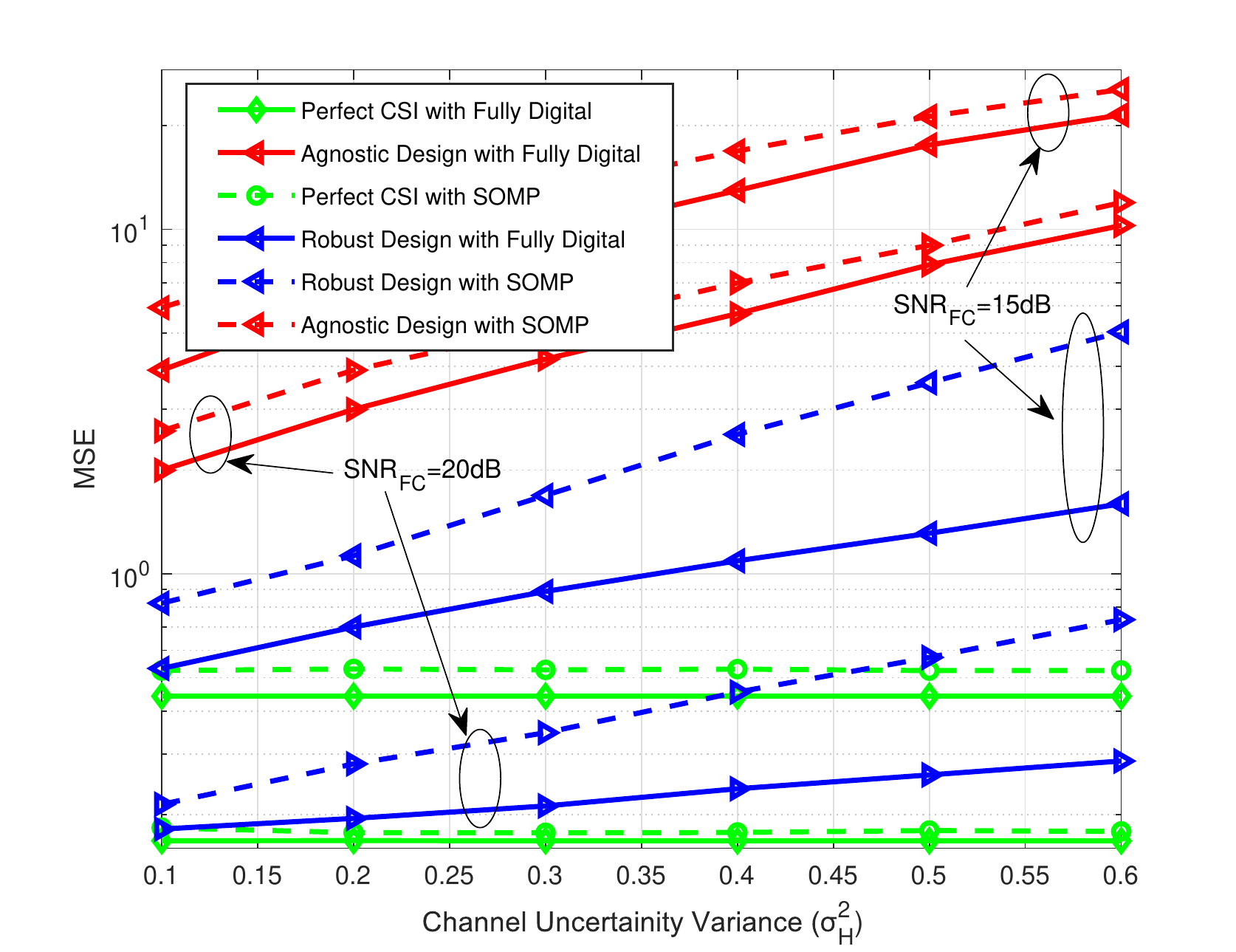}}
\hfil
\caption{$\left(a\right)$ MSE versus $\text{SNR}_{\text{FC}}$ $ \left(b\right)$ MSE performance with varying uncertainty variance $\left(\sigma_H^2\right)$ of the hybrid beamforming design with stochastic CSI uncertainty.}
\label{MVDP2}
\end{figure*}
\section{Simulation Results}\label{sim_results}
In our simulations a mmWave MIMO IoTNe associated with ${M}=24$ sensors is considered, where each IoTNo is assumed to be equipped with ${N}_T=5$ antennas, and the FC employs ${N}_R=16$ antennas with  $N^m_{\text{RF}}=3$ and $N^{\text{FC}}_{\text{RF}}=4$.  Monte-Carlo simulations are performed for averaging the effects of the random channel realizations and other quantities. The mmWave MIMO channel $\mathbf{H}_m$ between each IoTNo $m$ and the FC is assumed to have ${N}_{cl}=10$ clusters and  each path-gain element ${\alpha}_{k,m}$ is generated as i.i.d. $ \mathcal{CN} \left(0,1\right)$ for $1 \leq k \leq N_{cl}$, and $1 \leq m \leq M$. The dimension of the unknown parameter vector is set as ${p}=4$ and the number of observations at each IoTNo is considered to be ${q}=5$. The correlation coefficient at each IoTNo $m$ is set as $\beta_m=0.6$, $1 \leq m \leq M$, while the correlation coefficient at the FC is set to $\alpha=0.6$. The elements of the observation matrix $\mathbf{A}_m$ are generated as $ \mathcal{CN} \left(0,1\right)$. The observation noise covariance matrix $\mathbf{R}_m$ for each IoTNo $m$ is set as $\mathbf{\sigma}_{m}^2 \mathbf{I}_{q}$, and the channel noise covariance matrix $\mathbf{R}_u$ is assumed to be $\mathbf{\sigma}_{u}^2 \mathbf{I}_{{N}_R}$. The observation SNR is defined as $\text{SNR}_{\text{OB}}=\frac{1}{\mathbf{\sigma}_{m}^2}$. Furthermore, the SNR at the FC is defined as $\text{SNR}_{\text{FC}}=\frac{1}{\mathbf{\sigma}_{u}^2}$, which is set separately for each experiment. \par

%\begin{figure*}
%\centering
%\hspace{-1pt}\subfloat[]{\includegraphics[width=0.5\linewidth]{MSE_vs_NRF_T.eps}} 
%\hfil
%\hspace{-1pt}\subfloat[]{\includegraphics[width=0.5\linewidth]{MSE_vs_NRF_T_Noisy.eps}}
%\hfil
%\caption{$\left(a\right)$ MSE versus the number of RF chains at each sensor for the noiseless MAC scenario $ \left(b\right)$ MSE versus the number of RF chains at each sensor for the general scenario with noisy MAC.}
%\label{MVDP1}
%\end{figure*}

\begin{figure*}
\centering
\subfloat[]{\includegraphics[width=8.5cm, height=6.75 cm]{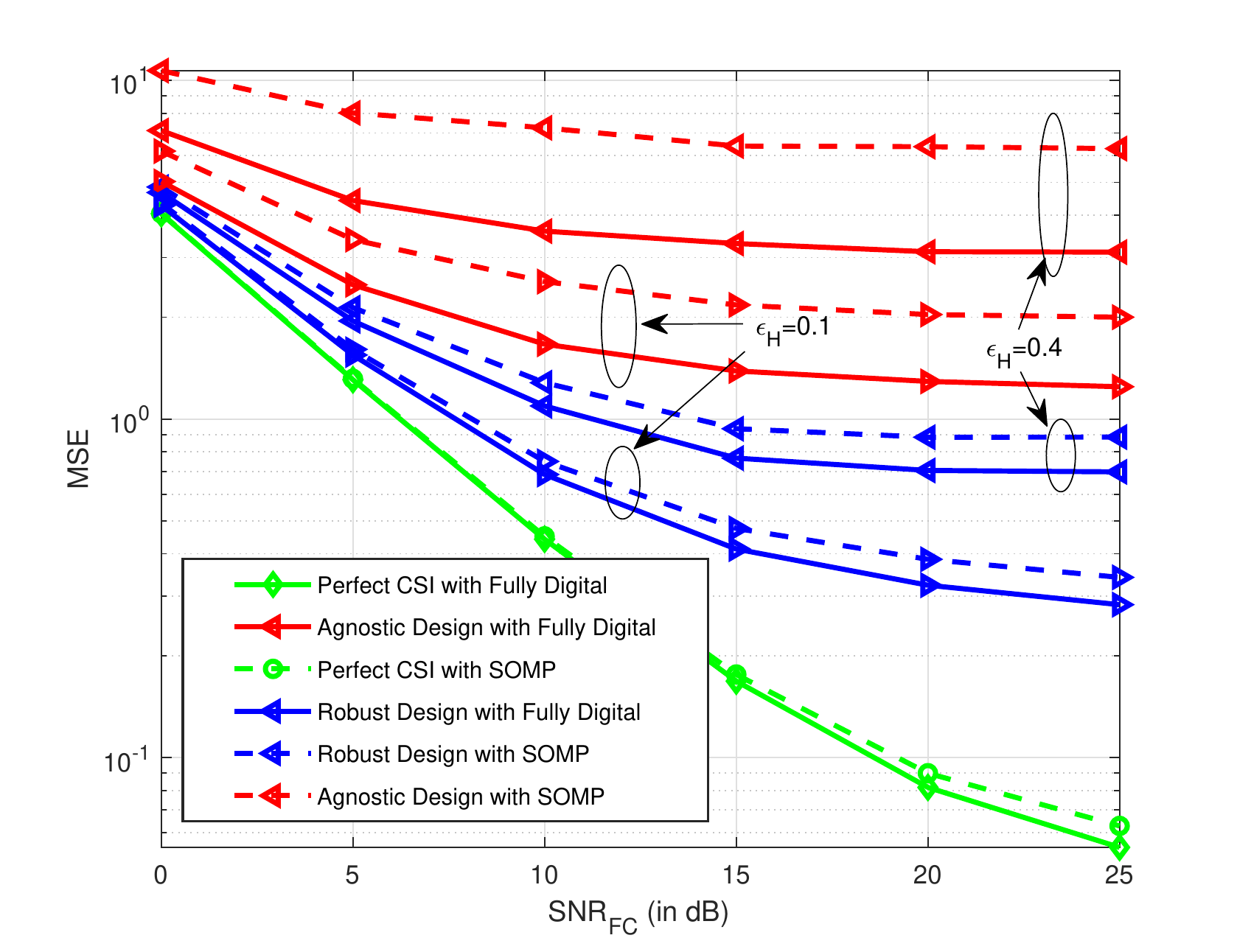}} 
\hfil
\hspace{-15pt}\subfloat[]{\includegraphics[width=8.5cm, height=6.75 cm]{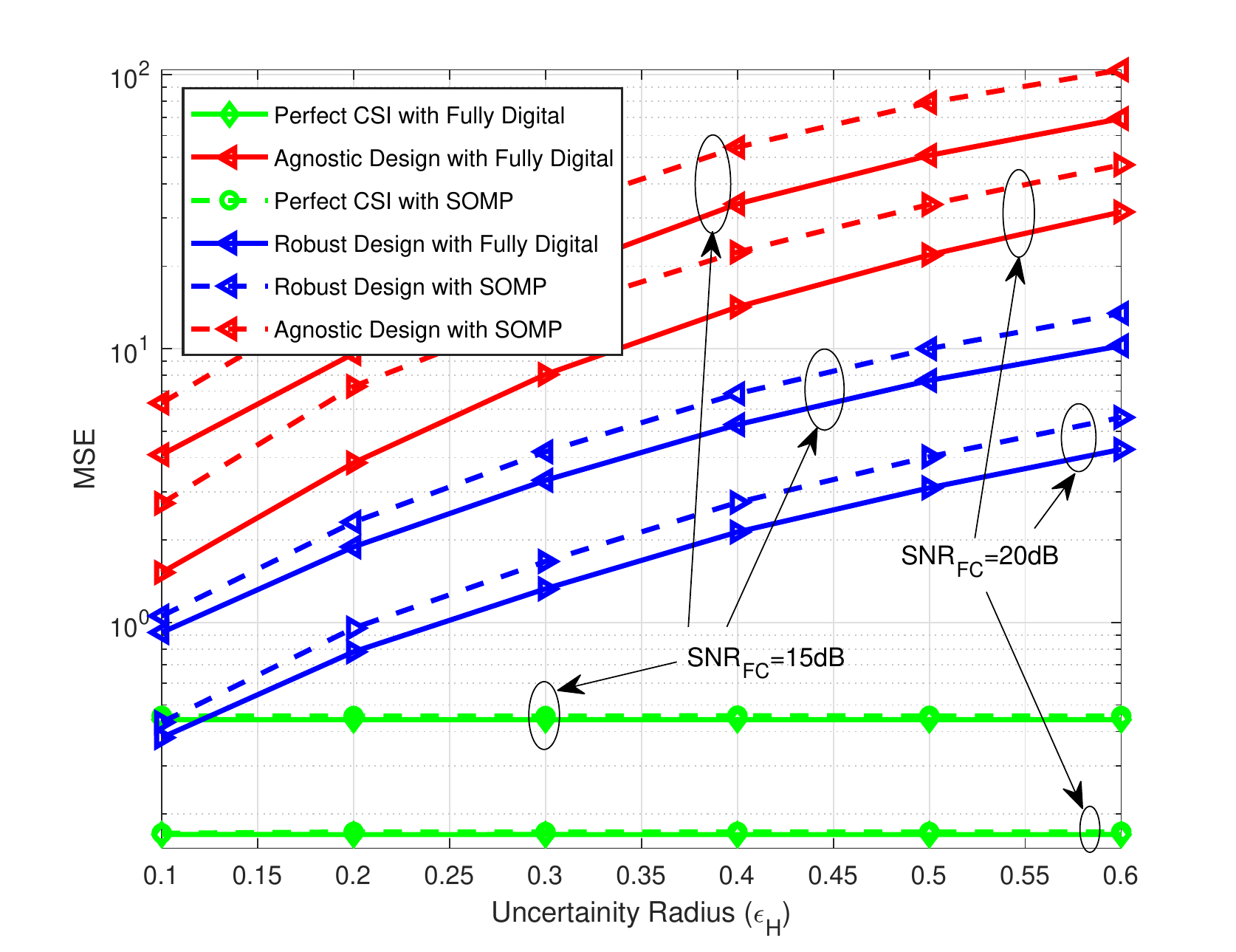}}
\hfil
\caption{$\left(a\right)$ MSE versus $\text{SNR}_{\text{FC}}$  $ \left(b\right)$ MSE versus channel uncertainty radius $\left(\epsilon_H\right)$ of the hybrid beamforming design with norm ball CSI uncertainty.}
\label{MVDP3}
\end{figure*}
Fig. \ref{MVDP} (a) shows the estimation performance of the three SOMP-based hybrid beamforming designs proposed in Section-III as a function of $\text{SNR}_{\text{FC}}$ for $M \in \{20,30\}$. The three schemes are 1) MSE minimization considering only the estimation constraint in \eqref{MSE_Min}, 2) estimation and total power constraint in \eqref{Total}, and finally 3) considering the estimation constraint and $M$ individual IoTNo power constraints as described in \eqref{individual}. Additionally, the corresponding optimal fully digital precoder's MSE performance is also plotted for benchmarking the performance of the SOMP-based hybrid designs. It can be readily observed from the figure that the SOMP based hybrid beamforming designs yield a performance close to the fully digital MMSE TPC designs, which bears testimony to the efficacy of the proposed designs. It also approaches to the centralized MMSE benchmark derived in \eqref{MMSE_Bench} at high $\text{SNR}_{\text{FC}}$. A theoretical proof for this is given in Appendix-\ref{AppB}. Also, the MSE performance further improves as the number of sensors increases, since more observations become available at the FC which results in an improved parameter estimation performance. \par
Fig. \ref{MVDP} (b) shows the MSE performance of the proposed hybrid TPC schemes as a function of the number of sensors $M$ in the IoTNe for different values of $\text{SNR}_{\text{OB}} \in \{-7, 1.25\}$ dB, with $\text{SNR}_{\text{FC}}$ set to $30$ dB. The MSE of the optimal fully digital TPC as well as the centralized MMSE benchmark are also plotted therein. Observe again that the MSE decreases as the number of sensors increases. Furthermore, increasing the values of $\text{SNR}_{\text{OB}}$ leads to a corresponding improvement in the estimation accuracy, thanks to the reduced observation noise variance. \par
\begin{figure*}
\centering
\subfloat[]{\includegraphics[width=8.5cm, height=6.75 cm]{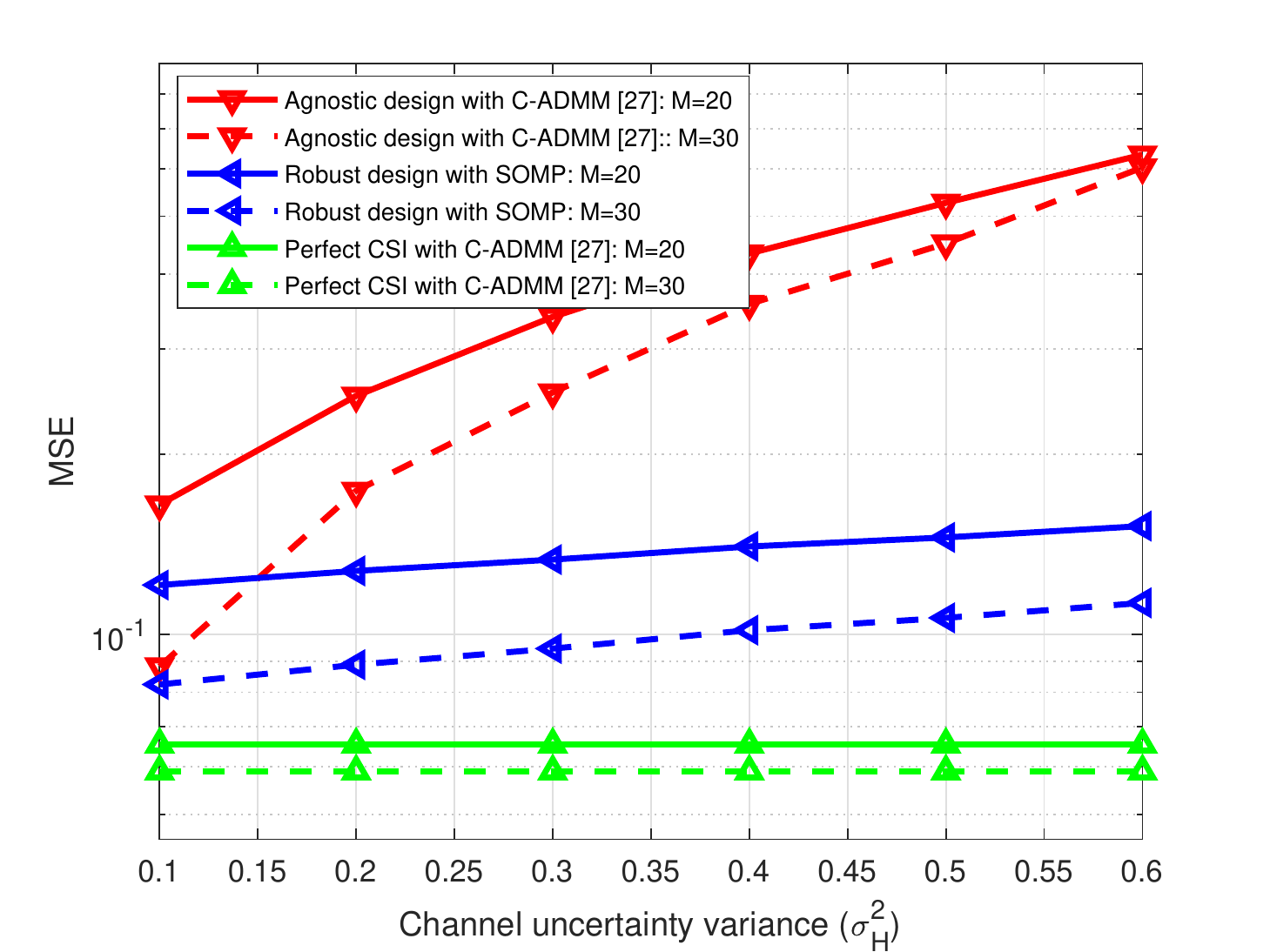}} 
\hfil
\hspace{-15pt}\subfloat[]{\includegraphics[width=8.5cm, height=6.75 cm]{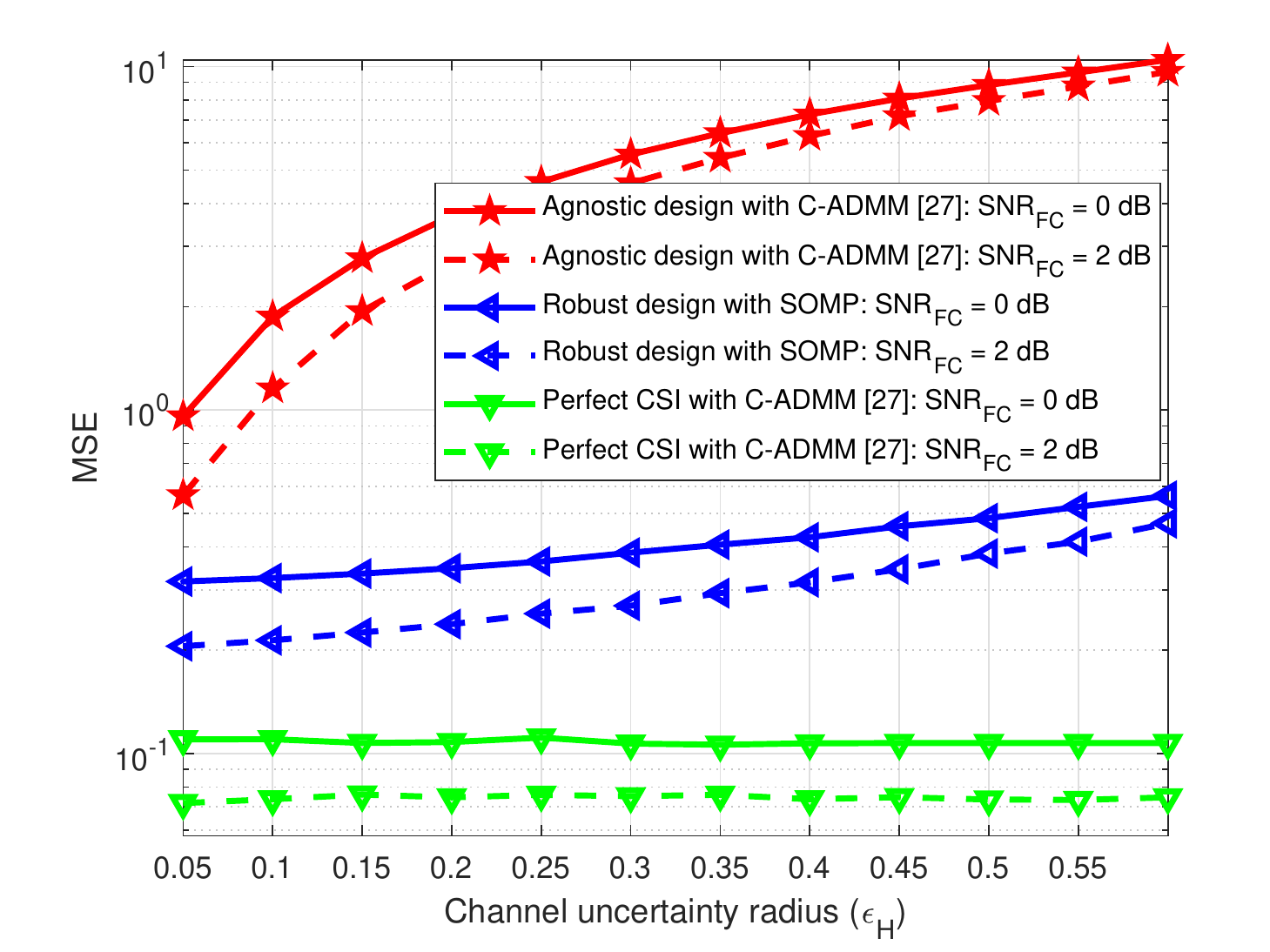}}
\hfil
\caption{\textcolor{black}{MSE performance with the C-ADMM scheme in \cite{liu2021hybrid} $\left(a\right)$ for the stochastic CSI uncertainty model with $\text{SNR}_{\text{FC}}=4\  \text{dB}, m=3, q=3, N_T = 10, N_R = 16, N_{\text{RF}}^m = 3, M \in \lbrace20,30\rbrace$ $ \left(b\right)$ for the norm ball CSI uncertainty model with $ M=10, m=3, q=3, N_T = 10, N_R = 16, N_{\text{RF}}^m = 3, \text{SNR}_{\text{FC}} \in \lbrace 0,2\rbrace \ \text{dB}$.}}
\label{comp}
\end{figure*}
Fig. \ref{fig:mse_snr_rf} depicts the MSE performance of the SOMP-based total-power-constrained design in Eq. (23) for the scenario having $N_{\text{RF}}^m \in \{1,2,3\}$ RF chains at each IoTNe. It can be readily observed from the figure that the proposed design is general in nature and can operate even with a single RF chain and multiple antennas at each IoTNe without significantly affecting the overall MSE performance of the system. \par
Fig. \ref{MVDP2} (a) characterizes the MSE performance of the robust hybrid beamforming design proposed in Section-\ref{Robust_Stochastic} for the stochastic CSI uncertainty model against varying $\text{SNR}_{\text{FC}}$ for the uncertainty variance $\sigma_H^2 \in \{0.1,0.4\}$. The performance of the corresponding average MMSE fully digital TPC is also plotted. Furthermore, in addition to the proposed robust design, the corresponding perfect and imperfect CSI-based designs are also included for illustrating the efficacy of the proposed robust design. The significant MSE performance gain achieved by the scheme presented over the uncertainty-agnostic counterpart that ignores the CSI uncertainty is clearly evidenced by the figure. Moreover, the MSE performance difference between the proposed robust and the uncertainty-agnostic design increases upon increasing the uncertainty variance $\sigma_H^2$. Fig. \ref{MVDP2} (b) explicitly shows the performance of the proposed robust hybrid beamforming design as a function of channel uncertainty variance $\sigma_H^2$. As expected, it can once again be observed that the MSE performance degrades as $\sigma_H^2$ increases. \par
Fig. \ref{MVDP3} (a) describes the worst-case MSE performance of the robust hybrid beamforming design proposed in Section-\ref{Robust_Bounded} for the norm ball CSI uncertainty model, and for different channel uncertainty radii $\epsilon_H \in \{0.1,0.4\}$.  The benefits of the presented  robust design are once again evidenced by its performance, which is close to that of the TPCs/ RC designed with perfect CSI and also by the  significant performance gain over the uncertainty-agnostic design. Reassuringly, the SOMP-based robust TPCs perform very close to their digital counterparts as well. This reinforces our claim that the robust designs formulated are capable of successfully overcoming the degradation arising due to the realistic imperfect CSI knowledge. Fig. \ref{MVDP3} (b) portrays the MSE performance as a function of the channel uncertainty radius $\epsilon_H$ for the perfect, robust and the agnostic designs  as well as for the corresponding fully digital baseband system for different values of $\text{SNR}_{\text{FC}} \in \{15,20\}$ dB. Similar to the previous figure, the robust design conceived for the norm ball CSI model once again outperforms the agnostic design with a fair margin, with the gap widening as $\text{SNR}_{\text{FC}}$ decreases.\par
\textcolor{black}{Fig. \ref{comp} (a) characterizes the MSE performance of the proposed SOMP-based robust hybrid TPC/ RC design along with the C-ADMM-based hybrid TPC/ RC design of \cite{liu2021hybrid} relying on both the perfect and with the CSI-uncertainty agnostic arrangements in the face of stochastic CSI uncertainty. The MSE is plotted as a function of the channel uncertainty variance $\sigma_H^2$ for different number of IoTNos, i.e., $M \in \{20,30\}$ subject to per IoTNo power constraints. It can be readily deduced from the figure that the proposed robust design offers a significant MSE performance improvement over the C-ADMM based uncertainty agnostic hybrid design. Furthermore, the MSE performance gap between both the robust and the CSI uncertainty agnostic design is seen to increase upon increasing $\sigma_H^2$. This illustrates the efficacy of our proposed robust design. Moreover, the MSE performance of the robust design improves upon increasing the number of IoTNos in mmWave MIMO IoTNe, reinforcing the trend seen earlier.}\par
\textcolor{black}{Fig. \ref{comp} (b) depicts the MSE performance of the proposed robust hybrid TPC/ RC design against the C-ADMM-based hybrid TPC/ RC design described in \cite{liu2021hybrid} for the scenario associated with norm ball CSI uncertainty. Once again, our proposed robust design is observed to achieve a significant MSE performance improvement over the C-ADMM based uncertainty agnostic design. Furthermore, with the increase in $\epsilon_H$, the MSE performance gap between the robust and agnostic designs is seen to increase. This illustrates that the proposed robust design is also efficient for the norm-ball CSI uncertainty model.}
\section{Conclusion}\label{Con}
Linear hybrid beamforming techniques were developed for vector parameter estimation in a coherent MAC-based mmWave MIMO IoTNe. Since the general MSE optimization is non-convex due to the constant gain constraint pertaining to the RF phase-shifters that form the RF TPC and RC, the optimal minimum MSE digital TPC is initially obtained under the zero-forcing and total/  per IoTNo power constraints. Subsequently, the popular SOMP algorithm was employed for decomposing the fully digital TPC into its RF and baseband components. Next, novel average and worst-case MSE analyses were derived for designing the robust TPCs for the stochastic and norm ball models, respectively, in scenarios associated with CSI uncertainty. The centralized MMSE that acts as a lower bound was determined for the system, which yields exceptional insights into the MSE performance.  Detailed simulation results demonstrated the efficacy of the proposed algorithms and a performance close to the MMSE bound.

%\section*{Acknowledgments}
%\textcolor{red}{\thanks{L. Hanzo would like to acknowledge the financial support of the Engineering and Physical Sciences Research Council projects EP/W016605/1 and EP/P003990/1 (COALESCE) as well as of the European Research Council's Advanced Fellow Grant QuantCom (Grant No. 789028)}}

\appendices
\section{Proof of Lemma 1}\label{Lemma1}
Since the matrix $\mathbf{\Psi}_m$ is defined as $\mathbf{\Psi}_m= \left(  \mathbf{R}_m \otimes \mathbf{H}_m^H \mathbf{H}_m \right)$, this implies that \\ $\mathbf{\Psi}_m^{-1} = \left[  \mathbf{R}_m^{-1} \otimes \left( \mathbf{H}_m^H \mathbf{H}_m \right)^{-1} \right]$, which can be further expressed as
\begin{align}\label{EqLemma}
\mathbf{\Psi}_m^{-1} =&\begin{bmatrix}
r_{11}\left( \mathbf{H}_m^H \mathbf{H}_m \right)^{-1} & \cdots & r_{1m}\left( \mathbf{H}_m^H \mathbf{H}_m \right)^{-1}\\
r_{21}\left( \mathbf{H}_m^H \mathbf{H}_m \right)^{-1}  &  \cdots & r_{1m}\left( \mathbf{H}_m^H \mathbf{H}_m \right)^{-1}\\
\vdots & \ddots & \vdots\\
r_{m1}\left( \mathbf{H}_m^H \mathbf{H}_m \right)^{-1}  & \cdots & r_{mm}\left( \mathbf{H}_m^H \mathbf{H}_m \right)^{-1}
\end{bmatrix} ,   
\end{align}
where $r_{ij} = \left[\mathbf{R}_m^{-1}\right]_{i,j}$. From the expression in \eqref{EqLemma}, it can be seen that any column of the matrix $\mathbf{F}_m$ can be written as a linear combination of the columns of the matrix $\left( \mathbf{H}_m^H \mathbf{H}_m \right)^{-1}$. Let the singular value decomposition (SVD) of $\mathbf{H}_m$ be defined as  $\mathbf{H}_m= \mathbf{U}_m \mathbf{\Sigma}_m \mathbf{V}_m^H$. Thus, we have $\left( \mathbf{H}_m^H \mathbf{H}_m \right)^{-1} = \mathbf{V}_m \tilde{\mathbf{\Sigma}}_m \mathbf{V}_m^H$, where $\tilde{\mathbf{\Sigma}}_m = \mathrm{diag} \left\{ \big[ \mathbf{\Sigma}_m(i,i) \big]^{-2} \right\}_{i=1}^{t}$, which implies that the column space of the TPC $\mathbf{F}_m$ lies in the column space of the matrix $\mathbf{V}_m$. Finally, one can conclude that
\begin{align}
\mathcal{C}(\mathbf{F}_m)\subseteq \mathcal{R}(\mathbf{H}_m),
\end{align} 
since, the column space of $\mathbf{V}_m$ constitutes the row space of the channel matrix $\mathbf{H}_m$.
\section{Asymptotic MSE analysis for the MSE minimization proposed in \eqref{MSE_Min}}\label{AppB}
\begin{thm}
Asymptotically, i.e., at very high $\text{SNR}_{\text{FC}}$, the MSE of the linear hybrid TPC design developed in \eqref{MSE_Min} approaches the centralized MMSE benchmark derived in \eqref{MMSE_Bench}, i.e.,
\begin{equation}
\lim_{\mathrm{SNR}_{\mathrm{FC}} \to \infty} \mathrm{MSE}-\mathrm{MSE}_{\mathrm{MMSE}}=0.
\end{equation}
\end{thm}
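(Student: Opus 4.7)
The plan is to evaluate the asymptotic MSE of the proposed hybrid design in closed form and show it collapses onto the right-hand side of the centralized bound. Since the only dependence of the MSE expression on $\text{SNR}_{\text{FC}}$ enters through the additive term $\mathrm{Tr}[\mathbf{W}_{\text{RF}}^H \mathbf{R}_u \mathbf{W}_{\text{RF}}] = \sigma_u^2 \|\mathbf{W}_{\text{RF}}\|_F^2$, this vanishes as $\sigma_u^2 \to 0$. The limiting MSE therefore equals the optimum of the ZF-constrained quadratic program, which by the KKT closed-form $\mathbf{f}^{\mathrm{opt}} = \boldsymbol{\Psi}^{-1}\mathbf{Z}^H(\mathbf{Z}\boldsymbol{\Psi}^{-1}\mathbf{Z}^H)^{-1}\mathbf{b}$ reduces to $\mathbf{b}^H (\mathbf{Z}\boldsymbol{\Psi}^{\dagger}\mathbf{Z}^H)^{-1}\mathbf{b}$, where the pseudoinverse is used because each block $\boldsymbol{\Psi}_m$ has rank at most $p < qN_T$.

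Next I would exploit the Kronecker factorizations $\mathbf{Z}_m = \mathbf{A}_m^T \otimes (\mathbf{W}_{\text{RF}}^H \mathbf{H}_m)$ and $\boldsymbol{\Psi}_m = \mathbf{R}_m^T \otimes (\mathbf{H}_m^H \mathbf{W}_{\text{RF}}\mathbf{W}_{\text{RF}}^H \mathbf{H}_m)$, together with the mixed-product rule $(\mathbf{A}\otimes\mathbf{B})(\mathbf{C}\otimes\mathbf{D})=(\mathbf{AC})\otimes(\mathbf{BD})$, to obtain
\[
\mathbf{Z}_m \boldsymbol{\Psi}_m^{\dagger} \mathbf{Z}_m^H = (\mathbf{A}_m^H \mathbf{R}_m^{-1} \mathbf{A}_m)^T \otimes \bigl[\mathbf{W}_{\text{RF}}^H \mathbf{H}_m (\mathbf{H}_m^H \mathbf{W}_{\text{RF}} \mathbf{W}_{\text{RF}}^H \mathbf{H}_m)^{\dagger} \mathbf{H}_m^H \mathbf{W}_{\text{RF}}\bigr].
\]
A compact SVD of the $p\times N_T$ matrix $\mathbf{W}_{\text{RF}}^H \mathbf{H}_m$, which is full row rank under the dominant array-response choice of $\mathbf{W}_{\text{RF}}$ described in Section III, shows that the bracketed factor collapses to $\mathbf{I}_p$. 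Summing over $m$ and recognising $\mathbf{A}^H \mathbf{R}_v^{-1}\mathbf{A} = \sum_{m=1}^M \mathbf{A}_m^H \mathbf{R}_m^{-1} \mathbf{A}_m$ from the block-diagonal structure of $\mathbf{R}_v$ and the vertical stacking of $\mathbf{A}$, one obtains $\mathbf{Z}\boldsymbol{\Psi}^{\dagger}\mathbf{Z}^H = (\mathbf{A}^H \mathbf{R}_v^{-1} \mathbf{A})^T \otimes \mathbf{I}_p$.

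Finally, the vec identities $(\mathbf{X}^{-T} \otimes \mathbf{I}_p)\,\mathrm{vec}(\mathbf{I}_p) = \mathrm{vec}(\mathbf{X}^{-1})$ and $\mathrm{vec}(\mathbf{I}_p)^H \mathrm{vec}(\mathbf{Y}) = \mathrm{Tr}[\mathbf{Y}]$, applied with $\mathbf{b} = \mathrm{vec}(\mathbf{I}_p)$, reduce the asymptotic MSE to $\mathrm{Tr}[(\mathbf{A}^H \mathbf{R}_v^{-1} \mathbf{A})^{-1}]$. This coincides with the centralized MMSE bound whenever the observation-side information $\mathbf{A}^H \mathbf{R}_v^{-1} \mathbf{A}$ dominates the prior precision $\mathbf{R}_\theta^{-1}$, which is the regime implicit in the theorem (large $M$ or weakly informative prior); the difference $\mathrm{Tr}[(\mathbf{A}^H\mathbf{R}_v^{-1}\mathbf{A})^{-1}] - \mathrm{Tr}[(\mathbf{R}_\theta^{-1}+\mathbf{A}^H\mathbf{R}_v^{-1}\mathbf{A})^{-1}]$ can be bounded by a standard matrix-inversion-lemma argument and shown to vanish in that limit.

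The hard part will be the pseudoinverse step: one must verify that the KKT derivation still yields the form $\mathbf{b}^H(\mathbf{Z}\boldsymbol{\Psi}^{\dagger}\mathbf{Z}^H)^{-1}\mathbf{b}$ when $\boldsymbol{\Psi}$ is rank-deficient (interpreting the inverse on the range of $\boldsymbol{\Psi}$ and checking that $\mathbf{Z}^H\mathbf{c}$ lies in that range for every multiplier $\mathbf{c}$), and confirm that the array-response construction of $\mathbf{W}_{\text{RF}}$ generically renders $\mathbf{W}_{\text{RF}}^H \mathbf{H}_m$ of full row rank~$p$. The secondary subtlety is the prior-precision gap outlined above: the theorem's equality is exact only modulo $\mathbf{R}_\theta^{-1}$, so the concluding step must either take this high-information limit explicitly or quantify the residual gap.
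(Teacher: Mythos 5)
Your proposal follows essentially the same route as the paper's Appendix B: the term $\mathrm{Tr}\left[\mathbf{W}_{\text{RF}}^H\mathbf{R}_u\mathbf{W}_{\text{RF}}\right]$ vanishes at high $\text{SNR}_{\text{FC}}$, the Kronecker mixed-product rule collapses $\mathbf{Z}\mathbf{\Psi}^{-1}\mathbf{Z}^H$ to $\left(\mathbf{A}^H\mathbf{R}_v^{-1}\mathbf{A}\right)^T\otimes\mathbf{I}_p$ via the projector identity, and the vec/trace identities reduce the asymptotic MSE to $\mathrm{Tr}\left[\left(\mathbf{A}^H\mathbf{R}_v^{-1}\mathbf{A}\right)^{-1}\right]$, which is then matched to the centralized bound. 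You are in fact more careful than the paper on precisely the two subtleties you flag: the paper writes ordinary inverses where the rank-deficient $\mathbf{H}_m^H\mathbf{W}_{\text{RF}}\mathbf{W}_{\text{RF}}^H\mathbf{H}_m$ really requires a pseudoinverse (with full row rank of $\mathbf{W}_{\text{RF}}^H\mathbf{H}_m$), and it closes the prior-precision gap between $\mathrm{Tr}\left[\left(\mathbf{A}^H\mathbf{R}_v^{-1}\mathbf{A}\right)^{-1}\right]$ and $\mathrm{Tr}\left[\left(\mathbf{R}_\theta^{-1}+\mathbf{A}^H\mathbf{R}_v^{-1}\mathbf{A}\right)^{-1}\right]$ only by silently invoking high-SNR observations, $\lambda_m\left(\mathbf{A}^H\mathbf{R}_v^{-1}\mathbf{A}\right)\gg 1$, which is exactly the extra regime your last paragraph identifies.
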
 
\begin{proof}
Exploiting the expression of $\mathrm{MSE}_{\mathrm{MMSE}}$ derived in \eqref{MMSE_Bench}, and substituting the eigenvalue decomposition of the matrix given by $\mathbf{A}^H \mathbf{R}_{v}^{-1} \mathbf{A}=\mathbf{Q}\mathbf{\Lambda}\mathbf{Q}^H$, one obtains
\begin{equation}
\mathrm{MSE}_{\mathrm{MMSE}}=\sum_{m = 1}^{p} \frac{1}{1+\lambda_m \left(\mathbf{A}^H \mathbf{R}_{v}^{-1}\mathbf{A}\right)},
\end{equation}
where without loss of generality $\mathbf{R}_{\theta}=\mathbf{I}_p$ and $\lambda_m(\mathbf{X})$ denotes the $m$th eigenvalue of the matrix $\mathbf{X}$. Substituting the optimal TPC vector expression derived in \eqref{Opt_Precoder} into the MSE expression of \eqref{MSE1}, one obtains
\begin{equation}\label{MSE_final}
\mathrm{MSE}=\mathbf{f}^H\left(\mathbf{Z}\mathbf{\Psi}^{-1}\mathbf{Z}^H\right)\mathbf{f} +\mathrm{Tr}\left[\mathbf{W}_{\text{RF}}^H \mathbf{R}_u \mathbf{W}_{\text{RF}}\right].
\end{equation}
The quantity ${\mathbf{Z}\mathbf{\Psi}^{-1}\mathbf{Z}^H}$ can be further simplified as
\begin{align}
&{\mathbf{Z}\mathbf{\Psi}^{-1}\mathbf{Z}^H}\nonumber \\
&= \sum_{m = 1}^{p} \left[\mathbf{A}_m^T \otimes \mathbf{W}_{\text{RF}}^H \mathbf{H}_m \right]{\left[\mathbf{R}_{m}^T \otimes \mathbf{H}_m^H \mathbf{W}_{\text{RF}} \mathbf{W}_{\text{RF}}^H \mathbf{H}_m \right]}^{-1} \nonumber \\ &\hspace{+30pt}\left[\mathbf{A}_m^\ast \otimes \mathbf{H}_m^H \mathbf{W}_{\text{RF}} \right] \nonumber \\
&= \sum_{m = 1}^{p} \Big[\left[\mathbf{A}_m^T \mathbf{R}_{m}^{-T}\mathbf{A}_m^\ast \right] \otimes \Big[\mathbf{W}_{\text{RF}}^H \mathbf{H}_m {\left(\mathbf{H}_m^H \mathbf{W}_{\text{RF}} \mathbf{W}_{\text{RF}}^H \mathbf{H}_m \right)}^{-1} \nonumber \\ &\hspace{+30pt}\mathbf{H}_m^H \mathbf{W}_{\text{RF}} \Big] \Big] \nonumber \\
&= \sum_{m = 1}^{p} \left[\mathbf{A}_m^T \mathbf{R}_{m}^{-T}\mathbf{A}_m^\ast \otimes \mathbf{I}_p \right],  
\end{align}
where the simplification above exploits the properties $\left(\mathbf{X} \otimes \mathbf{Y}\right)\left(\mathbf{Z} \otimes \mathbf{W}\right)=\left(\mathbf{X}\mathbf{Z} \otimes \mathbf{Y}\mathbf{W}\right)$ and $\left[\mathbf{X}_1 \otimes \mathbf{Y} , \mathbf{X}_2 \otimes \mathbf{Y}\right]=\left[\mathbf{X}_1, \mathbf{X}_2\right] \otimes \mathbf{Y}$. Now, substituting the above expression for $\mathbf{Z}\mathbf{\Psi}^{-1}\mathbf{Z}^H$ together with the property in \eqref{Trace_vec}, the first term of the MSE expression in \eqref{MSE_final} can be simplified as
\begin{align}
\mathbf{f}^H\left(\mathbf{Z}\mathbf{\Psi}^{-1}\mathbf{Z}^H\right)\mathbf{f}&=\mathbf{f}^H\left[\mathbf{A}^T \mathbf{R}_{v}^{-T}\mathbf{A}^\ast \otimes \mathbf{I}_p \right]^{-1}\mathbf{f}\nonumber \\
&=\mathrm{vec} \left[\mathbf{I}_p^H \right]\left[\left[\mathbf{A}^T \mathbf{R}_{v}^{-T}\mathbf{A}^\ast \right]^{-1} \otimes \mathbf{I}_p \right]\mathrm{vec} \left[\mathbf{I}_p \right]\nonumber \\
&=\mathrm{Tr}\left[\left[\mathbf{A}^H \mathbf{R}_{v}^{-1}\mathbf{A}\right]^{-1}\right]\nonumber \\
&=\sum_{m = 1}^{p} \frac{1}{\lambda_m \left(\mathbf{A}^H \mathbf{R}_{v}^{-1}\mathbf{A}\right)}.
\end{align}
At high $\text{SNR}_{\text{FC}}$, $\mathrm{Tr}\left[\mathbf{W}_{\text{RF}}^H \mathbf{R}_u \mathbf{W}_{\text{RF}}\right]=0$, and with high SNR IoTNo observations, we have $\lambda_m \left(\mathbf{A}^H \mathbf{R}_{v}^{-1}\mathbf{A}\right) >> 1$. Therefore, one can conclude that
\begin{align}
&\lim_{\mathrm{SNR}_{\mathrm{FC}} \to +\infty} \mathrm{MSE}-\mathrm{MSE}_{\mathrm{MMSE}}\nonumber \\&=\sum_{m = 1}^{p} \frac{1}{\lambda_m \left(\mathbf{A}^H \mathbf{R}_{v}^{-1}\mathbf{A}\right)}-\sum_{m = 1}^{p} \frac{1}{1+\lambda_m \left(\mathbf{A}^H \mathbf{R}_{v}^{-1} \mathbf{A}\right)}\nonumber \\
&=\sum_{m = 1}^{p} \frac{1}{\lambda_m \left(\mathbf{A}^H \mathbf{R}_{v}^{-1}\mathbf{A}\right)}-\sum_{m = 1}^{p} \frac{1}{\lambda_m \left(\mathbf{A}^H \mathbf{R}_{v}^{-1} \mathbf{A}\right)}\nonumber \\
&=0.\nonumber
\end{align}

\end{proof}
 
\bibliographystyle{IEEEtran}
\bibliography{reference}
\end{document}